%% file: main.tex
\documentclass[10pt,twocolumn,twoside]{IEEEtran}
\usepackage[english]{babel}
\usepackage{amsmath,amssymb,amsfonts}
\usepackage{amsthm}
\usepackage{algorithm}
\usepackage{algorithmic}
\usepackage{textcomp}
\usepackage{xcolor}
\usepackage{tikz}
\usepackage{pgfplots}
\pgfplotsset{compat=newest}
\usepackage{subfigure}
\usepackage{float}
\usepackage{adjustbox}
\usepackage{svg}
\usepackage{cite}
\usepackage{enumitem}
\usetikzlibrary{positioning,chains,fit,shapes,calc}
\usetikzlibrary{backgrounds}
\usetikzlibrary{arrows.meta}
\usetikzlibrary{shapes}
\usetikzlibrary{decorations.pathmorphing}

\newtheorem{theorem}{Theorem}

\newtheorem{remark}{Remark}
\newtheorem{proposition}{Proposition}
\newtheorem{lemma}{Lemma}

\definecolor{ao}{rgb}{0.55, 0.71, 0.0}
\definecolor{bleudefrance}{rgb}{0.19, 0.55, 0.91}
\definecolor{dimgray}{rgb}{0.41, 0.41, 0.41}

\begin{document}

% Title
% \title{Steering Fractional Exponents in Fractional-Order Networks and its Implication in Neuroscience}
\title{Steering Fractional-Order Network Dynamics via Joint Parameter and State Control}

% Authors - IEEE format
\author{
\IEEEauthorblockN{Alessandro Varalda\IEEEauthorrefmark{1}}
\IEEEauthorblockA{\IEEEauthorrefmark{1}Division of Systems and Control, Uppsala University, alessandro.varalda@it.uu.se}\\
\and
\IEEEauthorblockN{Sérgio Pequito\IEEEauthorrefmark{2}}
\IEEEauthorblockA{\IEEEauthorrefmark{2}Institute for Systems and Robotics, Instituto Superior Técnico, University of Lisboa, sergio.pequito@tecnico.ulisboa.pt}
}

% Make the title area
\maketitle

\begin{abstract}
\input{0_Abstract}
\end{abstract}

% Redefining Controllability for Fractional-Order Networks and its Applications in Neuroscience
\begin{IEEEkeywords}
Fractional-order dynamical network, control theory, neuroscience, network systems.
\end{IEEEkeywords}

\section{Introduction}\label{sec:introduction}
\input{1_Introduction}

\section{Problem Statement}\label{sec:problem_statement}
\input{2_Problem_Statement}

\section{Control of the Discrete-time Fractional-Order Networks}\label{sec:main_results}
\input{3_Main_Results}

\section{Illustrative Examples}\label{sec:numerical_simulation}
\input{4_Numerical_Simulation}

\section{Conclusion and Future Research}\label{sec:conclusion}
\input{5_Conclusion}

% Bibliography
\bibliographystyle{IEEEtran}
\bibliography{references}

\end{document}

%% file: 0_Abstract.tex
%
%
%       Abstract
%
%
This paper studies the control of discrete-time linear fractional-order networks, a flexible modeling framework for systems with long-range memory such as power grids, biological networks, and neuronal circuits. In contrast to the common view that fractional exponents (time-scales) are fixed parameters, we show that they can be systematically steered, together with the network coupling matrix, by appropriately designed input sequences. 

We first derive algebraic conditions under which the coupling matrix and the vector of fractional exponents of a given network can be reconfigured to desired values, and we characterize how truncating the infinite-memory term impacts the resulting dynamics. Building on these results, we construct an equivalent linear representation that isolates the contribution of memory, and we introduce a fractional reachability matrix that provides explicit conditions for jointly steering both network parameters and state in a finite number of steps. 

To address practical implementations, we further formulate an energy-constrained steering problem that incorporates actuator bounds and finite-memory approximations as a quadratic program. The framework is illustrated on low-dimensional toy examples, on larger networks with Erd\H{o}s--R\'enyi, Barab\'asi--Albert, and Watts--Strogatz topologies, and on a brain network model inferred from electrocorticography recordings of an epilepsy patient, where we showcase transitions between pre-seizure and seizure configurations.

%% file: 1_Introduction.tex
%
%
%       Introduction
%
%
Fractional-order networks have gained widespread application in modeling, analyzing, and controlling various domains, including biological systems, electrical circuits, viscoelastic materials, and neuronal activity~\cite{Magin2004,Podlubny1999,Adolfsson2005,Lundstrom2008}. These networks are prized for their ability to express and explain phenomena through their unique feature: a fractional order that represents a power-law long-range memory effect on self-dynamics (i.e., a \emph{time-scale}). 

This characteristic enables these networks to model complex interactions which can not be captured by traditional models. Empirical studies demonstrate that a combination of network coupling parameters and time-scales can characterize different behaviors of the networks, for instance, in the context of brain networks, and in particular epilepsy, specific time-scales can serve as biomarkers of a abnormal behavior, associated with seizures~\cite{Reed2022,Reed2023,Ionescu2017}. 
% abnormal brain states, particularly during epileptic seizures, are characterized by specific fractional-order ranges that induce network instability. 

These observations suggest a promising approach to neurological interventions through precise control of both network coupling parameters and the different time-scales (i.e., fractional exponents).
%Developing methods to manipulate these parameters, therefore, represents a crucial bridge between theoretical research and potential clinical applications in neuroscience.

In this work, we take inspiration from classical control theory to define the conditions and methods that allow steering the parameters of a fractional-order network. Given that the network's future states are intrinsically a function of both its dynamic matrix and the current state, and since fractional exponents can change over time in response to specific inputs~\cite{Ortigueira2019}, we subsequently expand our method to provide steering conditions and methods for both state and parameters of a fractional-order network.

% Our paper contributes to this evolving field by (i) introducing a framework to steer fractional-network parameters, establishing the necessary conditions and input sequences for parameter manipulation; (ii) introducing a framework to steer fractional-network parameters and state simultaneously, providing a comprehensive approach for joint control; (iii) defining a controllability matrix for fractional-order systems, extending classical controllability concepts to accommodate fractional dynamics; and (iv) providing numerical and illustrative examples that demonstrate the effectiveness of our proposed frameworks.
Our main contributions are as follows:

\textbf{1. Input-based steering of fractional-order network parameters.}
    We derive conditions under which a discrete-time linear fractional-order network can be driven from an initial set of fractional time-scales and coupling strengths to a desired configuration that captures a target interaction pattern among the network nodes, by suitable input sequences. The approach separates and then combines the steering of the coupling structure via state feedback with the steering of the fractional time-scales via input modification, and provides explicit bounds on the error introduced by memory truncation.

 \textbf{2. Joint parameter--state steering via fractional reachability.}
    We construct an equivalent linear representation of the target fractional-order network that isolates the memory contribution as a structured disturbance. Using this representation, we introduce a fractional reachability matrix and obtain explicit conditions and closed-form expressions for input sequences that simultaneously steer network parameters and state to prescribed targets.

    \textbf{3. Energy-aware and constraint-aware control formulation.}
    We formulate an energy-minimization problem with actuator bounds and finite-memory approximations that embeds the above steering laws into a quadratic program, suitable for practical implementations in systems with limited control authority.

    \textbf{4. Validation on synthetic and neuroscience-inspired networks.}
    We validate the framework on synthetic networks with Erd\H{o}s-R\'enyi, Barab\'asi-Albert, and Watts-Strogatz topologies, demonstrating scalability beyond low-dimensional examples. We further apply our methodology to a brain network inferred from electrocorticography data of an epilepsy patient, illustrating transitions between pre-seizure and seizure-like configurations driven by parameter steering.

% The remain of this paper is organized as follows. Section II formalizes the problem statement, introducing two key challenges: steering fractional-order network parameters from initial to desired configurations (Problem 1) and simultaneously controlling both network parameters and network state (Problem 2). 

% Section III presents our main theoretical contributions, beginning with methods for steering coupling matrices through state feedback (Theorem 1), followed by techniques for controlling fractional exponents via input sequences (Theorem 2), and culminating in a unified framework for combined parameter steering (Theorem 3) and simultaneous parameter-state control (Theorem 4). 

% Section IV validates our theoretical framework through numerical examples, including a detailed illustration of simultaneous parameter and state steering in a 3-dimensional DTLFON, robustness analysis across different network topologies (Erdős-Rényi, Barabási-Albert, and Watts-Strogatz), and a practical demonstration using real electrocorticography data from epilepsy patients to showcase the transition between pre-seizure and seizure brain states. 

% Finally, Section V concludes with a discussion of the implications of our results and outlines promising directions for future research.

The remainder of this paper is organized as follows. Section~II introduces the problem formulation and two control tasks: steering fractional-order network parameters and jointly steering parameters and state. Section~III develops the theoretical framework enabling these tasks. Section~IV presents numerical studies on low-dimensional examples, synthetic network topologies, and a brain network model derived from electrocorticography data. Section~V concludes and outlines directions for future work.

%% file: 2_Problem_Statement.tex
%
%       Problem Statement
%
%
% A fundamental challenge in dynamical systems theory involves understanding how to transform one network model into another through appropriate interventions. This question becomes particularly relevant when dealing with systems whose underlying structure needs to be modified to achieve desired behaviors. 

% A recent class of models has gained special attention in the domain of modeling brain activity, due to converging evidence that complex neural dynamics can be effectively captured by the so-called (non-commensurate) discrete-time linear fractional-order networks (DTLFON)~\cite{Reed2022}. In this context, we are interested in understanding how transitioning from one DTLFON model to another is possible through strategic input sequences. 

Complex dynamical networks across diverse fields—from neural circuits and biological systems to communication networks and power grids—can be effectively modeled using discrete-time linear fractional-order networks (DTLFON), $\mathcal{N}_w(\alpha,A,B)$, described as follows:
\begin{align}  
    \begin{split}
        & \Delta^\alpha x[k+1] = Ax[k] + Bu[k],
        \label{eq:fosDynInput}
    \end{split}
\end{align} 
where $\Delta^\alpha x[k+1]=[\Delta^{\alpha_1}x_1[k+1] \ \ldots \  \Delta^{\alpha_n}x_n[k+1]]^\intercal$ is the fractional difference operator characterized by the fractional orders $\alpha=[\alpha_1 \ \ldots \ \alpha_n]^\intercal\in(\mathbb{R}_0^+)^n$ over the different entries of the state $x[k]\in\mathbb{R}^n$ and defined as 
\begin{equation}\label{eq:grunwald_letnikov}
    \Delta^{\alpha_i} x_i[k+1]= x_i[k+1] + \sum_{j=0}^{k} \psi(\alpha_i,j+1)x_i[k-j],
\end{equation} 
where $\psi(\alpha_i,j) = \frac{\Gamma(j - \alpha_i)}{\Gamma(-\alpha_i)\Gamma(j+1)}$, and $\Gamma(s) = \int_{0}^{\infty} t^{s-1}e^{-t} \,dt $, with $s\in\mathbb{R}$, is the Gamma function~\cite{Baleanu2011}. 

Although the Gr\"unwald–Letnikov operator is well defined for arbitrary real orders, in this work we restrict attention to nonnegative exponents $\alpha_i \ge 0$ to ensure that all Gamma functions appearing in the definition of $\psi(\alpha_i,j)$ are well-defined.
%since the Gamma function has poles at negative integers.

Additionally, the matrix $A\in \mathbb{R}^{n\times n}$ is the coupling matrix that describes the spatial relationship between the different entries of $x[k]$, with $k\in\mathbb{N}$ being the time step. Notably, $u[k]\in\mathbb{R}^p$ captures the $p$ inputs that change the dynamics, according to the matrix $B\in\mathbb{R}^{n\times p}$, at instance~$k$. 

\textbf{Problem} Let us consider the generic DTLFON at time instant $k \in\mathbb{N}$, $\mathcal{N}_w(\alpha[k],A[k],B)$, defined as
\begin{align}  
    \begin{split}
        \Delta^{\alpha[k]} x[k + 1] &= A[k]x[k] + Bu[k],
    \end{split}
    \label{eq:coupled_dynamics}  
\end{align}
where both $\alpha[k]\in (\mathbb{R}_0^+)^{n}$ and $A[k]\in \mathbb{R}^{n\times n}$ evolve over time in a way that is induced by the applied input sequence $\{u[j]\}_{j=0}^{k}$, so that different input sequences correspond to different trajectories of $(\alpha[k],A[k])$.

Given a fixed finite horizon $T\in\mathbb{N}$, an initial network $\mathcal{N}_w(\alpha[0],A[0],B)$ with initial state $x[0]=x_0$, and a desired target configuration $(\alpha_d,A_d,x_d)$, we aim to find an input sequence $\{u[j]\}_{j=0}^{T-1}$ which allows us to transition from 
\vspace{0.5pt}
\begin{align}  
    \begin{split}
        \mathcal{N}_w(\alpha[0],A[0],B) &\rightarrow \mathcal{N}_w(\alpha[T],A[T],B), \\
        x[0]=x_0 &\rightarrow x[T]=x_d,
    \end{split}
\end{align}
according to the dynamics in~\eqref{eq:coupled_dynamics}, where $\alpha[T] = \alpha_d$ and $A[T] = A_d$. Here, the symbol “$\rightarrow$” denotes the evolution induced by the input sequence $\{u[j]\}_{j=0}^{T-1}$ %under~\eqref{eq:coupled_dynamics}. 
In other words, the data $
\alpha[0],\; A[0],\; B,\; x_0,\; T,\; \alpha_d,\; A_d,\; x_d
$ 
are given, and the design variables are the inputs $\{u[j]\}_{j=0}^{T-1}$.

This problem addresses the simultaneous control of both model structure and network state. This challenge raises fundamental questions about the notion of controllability in fractional-order networks.

Notice that when all the fractional-order exponents in $\alpha_d$ are set to zero, the network becomes a linear time-invariant (LTI) network. In this case, our framework for DTLFON parameter controllability must align with the classic notion of controllability.

%% file: 3_Main_Results.tex
%
%
%       Main Results
%
%
In this section we develop the control-theoretic framework that enables the two tasks posed in Section~II: steering fractional-order network parameters and jointly steering parameters and state. In Section~\ref{subsec:steer_A} we show how to steer the coupling matrix $A$ of the DTLFON to a desired $A_d$ via state feedback (Theorem~\ref{thm:theorem1}). Section~\ref{subsec:steer_alpha} then addresses input-based steering of the fractional exponents $\alpha$, combining an order-perturbation analysis with conditions for reaching a desired $\alpha_d$ (Lemma~\ref{lem:lemma1}, Theorem~\ref{thm:theorem2}). Building on these ingredients, Section~\ref{subsec:combined} presents a unified control law that simultaneously steers $(A,\alpha)$ to $(A_d,\alpha_d)$ (Theorem~\ref{thm:theorem3}). Section~\ref{subsec:param_state} extends the framework to joint parameter–state steering by introducing an equivalent linear representation and a fractional reachability-based design that drives both $(A,\alpha)$ and $x$ to prescribed targets (Proposition~\ref{prop:defractionalization}, Theorem~\ref{thm:theorem4}). Finally, Section~\ref{subsec:energy_qp} incorporates actuator constraints and finite-memory implementation into an energy-aware quadratic-program formulation suitable for practical applications.

%First, we summarize DTLFON's key features to set up the important lemmas needed for our main results. We solved the presented problem by decomposing it into two sub-problems: (1) steering the coupling matrix $A$ using state feedback, outlined in Theorem~\ref{thm:theorem1}, and steering the fractional exponents $\alpha$, presented in Theorem~\ref{thm:theorem2}. We then combine these results to provide a comprehensive solution in Theorem~\ref{thm:theorem3}. 
%we addressed the problem of simultaneously achieving the desired network parameter configuration $(\alpha_d, A_d)$ while controlling the system state $x$ to reach a specific target value $x_d$. 
%(2) We addressed the comprehensive challenge of joint parameter-state steering, where network parameters $(A,\alpha)$ and network state $(x)$ are simultaneously guided to desired targets, highlighted in Theorem~\ref{thm:theorem4}.

\subsection{Steering Coupling Matrix ($A$)}\label{subsec:steer_A}
We begin by addressing the problem of steering the coupling matrix $A \in \mathbb{R}^{n \times n}$ to a desired configuration $A_d \in \mathbb{R}^{n \times n}$ through state feedback control.

\begin{theorem}\label{thm:theorem1}[Coupling Matrix Steering via State Feedback]
Given the network $\mathcal{N}_w(\alpha,A,B)$ with coupling matrix $A \in \mathbb{R}^{n \times n}$, the network can be steered to achieve a new configuration
\begin{align}  
    \begin{split}
        \mathcal{N}_w(\alpha,A_d,B):= \quad \Delta^{\alpha} x[k+1] = A_dx[k] + B\tilde{u}[k],
        \label{eq:network_with_desired_A}
    \end{split}
\end{align}
with desired coupling matrix $A_d \in \mathbb{R}^{n \times n}$ through state feedback if and only if there exists a feedback matrix $K \in \mathbb{R}^{p \times n}$ such that
\begin{align}
     BK = (A-A_d).
     \label{eq:coupling_steering}
\end{align}

The required feedback control law is given by
\begin{align}
    Bu[k] = B\tilde{u}[k] - BKx[k],
    \label{eq:feedback_control}
\end{align}
where $u[k] \in \mathbb{R}^p$ is the original input sequence in~\eqref{eq:fosDynInput} and $\tilde{u}[k] \in \mathbb{R}^p$ is input sequence of the desired network.
\hfill $\circ$
\end{theorem}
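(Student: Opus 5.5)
The plan is to prove the two directions separately. Throughout, ``steering via state feedback'' means choosing an input of the form $u[k]=\tilde u[k]-Kx[k]$ with a static gain $K\in\mathbb{R}^{p\times n}$. The closed loop must then coincide with $\mathcal{N}_w(\alpha,A_d,B)$ in~\eqref{eq:network_with_desired_A} for every state trajectory and every auxiliary input $\tilde u$. The fractional operator $\Delta^\alpha$ acts only on the left-hand side of~\eqref{eq:fosDynInput}, and the feedback does not affect it. The argument is therefore purely algebraic on the right-hand side; the memory term plays no role.

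\emph{Sufficiency.} Assume $K$ satisfies~\eqref{eq:coupling_steering}. Substitute the control law~\eqref{eq:feedback_control} into~\eqref{eq:fosDynInput}:
\begin{align*}
\Delta^\alpha x[k+1] = Ax[k] + B\tilde u[k] - BKx[k] = (A-BK)x[k] + B\tilde u[k].
\end{align*}
By~\eqref{eq:coupling_steering}, $A-BK = A_d$, so this is exactly~\eqref{eq:network_with_desired_A}. The identity holds for all $k$ and for every history $x[0],\dots,x[k]$, so the closed-loop network is $\mathcal{N}_w(\alpha,A_d,B)$ with the same $\alpha$ and $B$.

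\emph{Necessity.} Suppose some state-feedback law $u[k]=\tilde u[k]-Kx[k]$ makes the closed loop equal to $\mathcal{N}_w(\alpha,A_d,B)$. Subtracting~\eqref{eq:network_with_desired_A} from the closed-loop equation gives $(A-BK-A_d)x[k]=0$ along all trajectories. The key step is to make sure $x[k]$ spans $\mathbb{R}^n$, so that we may conclude $A-BK-A_d=0$. For this I use the fact that the initial state $x[0]$ is arbitrary: at $k=0$, the relation $(A-BK-A_d)x[0]=0$ for all $x[0]\in\mathbb{R}^n$ forces the matrix to vanish. Hence $BK=A-A_d$.

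The main obstacle is interpretive rather than computational. Necessity requires that the equivalence of the two networks be understood as identity of the dynamics for all initial conditions, and not merely along a single trajectory, together with the restriction to static linear feedback. I would state this explicitly at the start of the proof. I would also add a remark that solvability of~\eqref{eq:coupling_steering} is equivalent to $\operatorname{im}(A-A_d)\subseteq\operatorname{im}B$. Equivalently, $BB^{\dagger}(A-A_d)=A-A_d$, in which case one admissible gain is $K=B^{\dagger}(A-A_d)$.
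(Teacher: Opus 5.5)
Your proposal is correct and follows the same route as the paper: you substitute the feedback law to get $(A-BK)x[k]+B\tilde{u}[k]$, match it with $A_d x[k]+B\tilde{u}[k]$, and read off the solvability condition $\mathrm{Im}(A-A_d)\subseteq\mathrm{Im}(B)$. Your necessity step, which quantifies over all initial states $x[0]\in\mathbb{R}^n$ to force $A-BK-A_d=0$, only makes explicit what the paper leaves implicit when it says ``it requires $A_d=A-BK$.''
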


\begin{proof}
Consider the network $\mathcal{N}_w(\alpha,A,B)$ with the feedback control law in~\eqref{eq:feedback_control}:
\begin{align}
    \Delta^{\alpha}x[k+1] &= Ax[k] + Bu[k] \\
    &= Ax[k] + B(\tilde{u}[k] - Kx[k]) \nonumber \\
    &= Ax[k] + B\tilde{u}[k] - BKx[k] \nonumber \\
    &= (A - BK)x[k] + B\tilde{u}[k]
    \label{eq:feedback_dynamics}
\end{align}

For this network to have the desired dynamics $\Delta^{\alpha}x[k+1] = A_d x[k] + B\tilde{u}[k]$, it requires $A_d = A - BK$, from which it follows
\begin{align*}
    BK = (A-A_d).
\end{align*}

This equation is solvable for $K \in \mathbb{R}^{p \times n}$ if and only if each column of $(A_d - A)$ lies in the column space of $B$, that is, if $(A_d - A)e_i \in \mathrm{Im}(B)\subseteq \mathbb{R}^n$ for all $i=1,\dots,n$, where $\mathrm{Im}(B)$ denotes the image (column space) of $B$.\end{proof}

\begin{remark}\label{rmk:feedback_design}
The feedback matrix $K \in \mathbb{R}^{p \times n}$ can be designed using classical control techniques such as pole placement or optimal control methods, depending on the specific performance requirements and constraints of the application.
\hfill $\circ$
\end{remark}

\begin{remark}\label{rmk:input_constraints}
When $B \neq I_{n \times n}$, the achievable set of coupling matrices $A_d$ is constrained by the structure of the input matrix $B \in \mathbb{R}^{n \times p}$. This limitation reflects the physical constraints imposed by the actuator configuration.
\hfill $\circ$
\end{remark}

\subsection{Steering Fractional Exponents ($\alpha$)}\label{subsec:steer_alpha}

Towards understanding the impact of changing the fractional difference of order $\beta$ and considering~\eqref{eq:grunwald_letnikov}, we have the following lemma.

\begin{lemma}\label{lem:lemma1}[Decomposition of Fractional Differences under Order Perturbations]
Let $\beta,\delta\beta\in\mathbb{R}_0^+$, the fractional difference of order $\beta+\delta\beta\in(\mathbb{R}_0^+)$ applied to $z\in\mathbb{R}$ at time $k+1\in\mathbb{N}$ can be expressed as
\begin{align}
    \begin{split}
        \Delta^{\beta + \delta \beta}z[k+1] &= z[k+1]+\sum_{j=0}^{k} \psi_{\beta + \delta \beta,j+1}z[k-j], 
        \label{eq:grunwald_letnikov_expand}
    \end{split}
\end{align}
with  
\begin{align}
    \begin{split}
        \psi_{\beta + \delta \beta,j} = \left\{\begin{array}{ll}
             & 1, \quad  j=0 \\
             & \psi_{\beta,j}+\psi_{\delta\beta,j}+e_{\beta,\delta\beta,j}, \quad  j>0,
        \end{array}\right.
        \label{eq:psi_funct_decomp}
    \end{split}
\end{align}
where $\psi_{\beta,j}\in\mathbb{R}$ and $\psi_{\delta\beta,j}\in\mathbb{R}$ are the weights inherent to $\beta$ and $\delta\beta$, respectively. 

The term $e_{\beta,\delta\beta,j+1}\in\mathbb{R}$ is a residual encompassing the cross-products of $\beta$ and $\delta\beta$ described by
\begin{align*}
e_{\beta, \delta \beta,j+1} = -\psi_{\beta,j}\frac{\delta\beta}{j+1}&-\psi_{\delta\beta,j}\frac{\beta}{j+1}\\
&+e_{\beta, \delta \beta,j}\left( \frac{j-(\beta+\delta\beta)}{j+1} \right),
\end{align*}
with $e_{\beta,\delta\beta,1} = 0$.
% Moreover, we have
% \begin{equation*}
% |e_{\beta,\delta\beta,j}| \le c(\beta,\delta\beta)\, j^{-\min\{\beta,\delta\beta\}-1},\quad j\ge 1,
% \end{equation*}
% for an explicit constant $c(\beta,\delta\beta)$ obtained from the recursion and Stirling-type bounds for ratios of Gamma functions (see, e.g., \cite[Ch.~13]{Arfken2013}; a detailed derivation follows from applying these bounds to the recursion~\eqref{eqn:proof3}).

\hfill $\circ$
\end{lemma}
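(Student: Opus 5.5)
The plan is to treat \eqref{eq:psi_funct_decomp} as the \emph{definition} of the residual, namely $e_{\beta,\delta\beta,j}:=\psi_{\beta+\delta\beta,j}-\psi_{\beta,j}-\psi_{\delta\beta,j}$ for $j\ge 1$. With that definition, \eqref{eq:grunwald_letnikov_expand} is simply \eqref{eq:grunwald_letnikov} written for the order $\beta+\delta\beta$, with $\psi_{a,j}:=\psi(a,j)$. The only substantive claims are therefore two: the value $\psi_{\beta+\delta\beta,0}=1$, and the fact that the residual so defined satisfies the stated recursion with initial value $e_{\beta,\delta\beta,1}=0$.

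\textbf{Step 1: a first-order recursion for the Gr\"unwald--Letnikov weights.} For any order $a\ge 0$ I would use $\Gamma(s+1)=s\Gamma(s)$ twice, once in the numerator $\Gamma(j+1-a)=(j-a)\Gamma(j-a)$ and once in the denominator $\Gamma(j+2)=(j+1)\Gamma(j+1)$. This gives
\begin{equation*}
\psi_{a,0}=1,\qquad \psi_{a,j+1}=\psi_{a,j}\,\frac{j-a}{j+1},\quad j\ge 0,
\end{equation*}
so that in particular $\psi_{a,1}=-a$. Equivalently, $\psi_{a,j}=(-1)^j\binom{a}{j}$.

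This step contains the one delicate point, which I expect to be the main (if mild) obstacle. When $a$ is a nonnegative integer, $\Gamma(-a)$ has a pole, so the defining quotient must be read as its limit. The generalized binomial coefficient provides exactly this limit, and it obeys the same recursion. I would therefore state that the weights are understood through the recursion (equivalently, by continuity in $a$), which covers all $a\in\mathbb{R}_0^+$ uniformly. Applying the recursion with $a=\beta+\delta\beta$ at $j=0$ immediately gives the case $j=0$ of \eqref{eq:psi_funct_decomp}.

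\textbf{Step 2: the initial residual.} Using $\psi_{a,1}=-a$, we get $e_{\beta,\delta\beta,1}=-(\beta+\delta\beta)+\beta+\delta\beta=0$, as claimed.

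\textbf{Step 3: the recursion for the residual.} Write $s=\beta+\delta\beta$ and fix $j\ge 1$. Apply Step~1 to $\psi_{s,j+1}$, substituting $\psi_{s,j}=\psi_{\beta,j}+\psi_{\delta\beta,j}+e_{\beta,\delta\beta,j}$, and also apply Step~1 to $\psi_{\beta,j+1}$ and $\psi_{\delta\beta,j+1}$. Subtracting, the residual at step $j+1$ is
\begin{equation*}
e_{\beta,\delta\beta,j+1}=e_{\beta,\delta\beta,j}\frac{j-s}{j+1}+\psi_{\beta,j}\frac{(j-s)-(j-\beta)}{j+1}+\psi_{\delta\beta,j}\frac{(j-s)-(j-\delta\beta)}{j+1}.
\end{equation*}
The two numerators simplify to $-\delta\beta$ and $-\beta$ respectively, which yields exactly the recursion in the statement.

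I would remark that the recursion is to be applied only for $j\ge 1$, starting from $e_{\beta,\delta\beta,1}=0$. At $j=0$ the expression would require an undefined $e_{\beta,\delta\beta,0}$, which is why the initial value is specified separately.
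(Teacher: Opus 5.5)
Your proposal is correct and follows the paper's proof: both use $\Gamma(s+1)=s\Gamma(s)$ to get the recursion $\psi_{a,j+1}=\psi_{a,j}\frac{j-a}{j+1}$, substitute $\psi_{\beta+\delta\beta,j}=\psi_{\beta,j}+\psi_{\delta\beta,j}+e_{\beta,\delta\beta,j}$ and collect terms to obtain the residual recursion, and check the cases $j=0$ and $j=1$ directly. Two of your choices are tighter than the paper: defining $e_{\beta,\delta\beta,j}$ explicitly as the difference replaces its informal ``repeat the decomposition'' step, and reading the weights as $(-1)^j\binom{a}{j}$ handles the poles of $\Gamma(-a)$ at nonnegative integer orders, including $\beta=0$, which the paper uses later.
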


\begin{proof}    
By replacing $\beta$ by $\beta+\delta\beta$ in $\psi(\beta,j)$ from~\eqref{eq:grunwald_letnikov}, we have
\begin{align}
    \begin{split}
        \psi_{\beta + \delta \beta,j} = \frac{\Gamma( j - (\beta + \delta \beta) )}{\Gamma(- (\beta + \delta \beta) )\Gamma(j+1)},
        \label{eq:proof1}
    \end{split}
\end{align}
and by invoking the following property of the Gamma function from~\cite{Arfken2013}, 
$\Gamma(p+1)=p\Gamma(p)$,
it is possible to rewrite~\eqref{eq:proof1} as
$$\psi_{\beta + \delta \beta,j} = \frac{\Gamma( j-1- (\beta + \delta \beta) )(j-1- (\beta + \delta \beta))}{\Gamma(- (\beta + \delta \beta) )\Gamma(j) j}$$
$$=\psi_{\beta + \delta \beta,j-1}\frac{(j-1- (\beta + \delta \beta))}{j}$$
$$=\psi_{\beta + \delta \beta,j-1}\left(\frac{j-1-\beta}{j}+\frac{j-1-\delta\beta}{j}+\frac{1-j}{j}\right)$$
$$=\psi_{\beta + \delta \beta,j-1}\left(\frac{j-1-\beta}{j}\right)+\psi_{\beta + \delta \beta,j-1}\left(\frac{j-1-\delta\beta}{j}\right) $$
$$+\psi_{\beta + \delta \beta,j-1}\left(\frac{1-j}{j}\right).$$
% \begin{equation}\label{eq:proof2}\tag{*}\end{equation}

Moreover, one can repeat the same decomposition presented above for all $\psi_{\beta + \delta \beta,j}$ down to $j-1$.  Ultimately, $\psi_{\beta + \delta \beta,j}$ can be represented as the sum of the following three terms:
\begin{enumerate}
    \item[(1$^{\text{st}})$] $(\frac{j-1-\beta}{j})(\frac{j-2-\beta}{j-1})\ldots(\frac{-\beta}{1})=\psi_{\beta,j}$,
    \item[(2$^{\text{nd}})$] $(\frac{j-1-\delta\beta}{j})(\frac{j-2-\delta\beta}{j-1})\ldots(\frac{-\delta\beta}{1})=\psi_{\delta\beta,j}$, and
    \item[(3$^{\text{rd}})$] denoted as $e_{\beta,\delta\beta,j}$, which entails the remaining residual cross-product terms.
\end{enumerate}
Hence, \[ \psi_{\beta + \delta \beta,j} = \psi_{\beta,j}+\psi_{\delta\beta,j}+e_{\beta,\delta\beta,j}. \]

Furthermore, we can leverage this result in order to compute $e_{\beta,\delta\beta,j+1}$, in fact

\[
\psi_{\beta + \delta \beta,j+1}=\psi_{\beta + \delta \beta,j}\left(\frac{j-\beta}{j+1}+\frac{j-\delta\beta}{j+1}-\frac{j}{j+1}\right) 
\]
\[
= (\psi_{\beta,j}+\psi_{\delta \beta,j}+e_{\beta, \delta \beta,j})\left(\frac{j-\beta}{j+1}+\frac{j-\delta\beta}{j+1}-\frac{j}{j+1}\right)
\]
\[
=\psi_{\beta,j}\frac{j-\beta}{j+1} + \psi_{\delta \beta,j}\frac{j-\delta\beta}{j+1}+\psi_{\beta,j}\left(\frac{j-\delta\beta}{j+1}-\frac{j}{j+1}\right)
\]
{\footnotesize \[
+\psi_{\delta\beta,j}\left(\frac{j-\beta}{j+1}-\frac{j}{j+1}\right)+e_{\beta, \delta \beta,j}\left(\frac{j-\beta}{j+1}+\frac{j-\delta\beta}{j+1}-\frac{j}{j+1}\right)
\] }
\[
= \psi_{\beta,j+1}+\psi_{\delta\beta,j+1}-\psi_{\beta,j}\frac{\delta\beta}{j+1}-\psi_{\delta\beta,j}\frac{\beta}{j+1}
\]
\[
\qquad\qquad\qquad\qquad\qquad\qquad + e_{\beta, \delta \beta,j}\left( \frac{j-(\beta+\delta\beta)}{j+1} \right),
\]

\noindent from which it follows that
\begin{align}
    \begin{split}\label{eqn:proof3}
    e_{\beta, \delta \beta,j+1} &= -\psi_{\beta,j}\frac{\delta\beta}{j+1}-\psi_{\delta\beta,j}\frac{\beta}{j+1}+\\
    & \qquad\qquad e_{\beta, \delta \beta,j}\left( \frac{j-(\beta+\delta\beta)}{j+1} \right).
    \end{split}
\end{align}

Notice that an exceptional case arises when $j=0$, leading to $\psi_{\beta + \delta \beta,j}$ to be equal to
\[
 \psi_{\beta + \delta \beta,0} = \frac{\Gamma( 0 - (\beta + \delta \beta) )}{\Gamma(- (\beta + \delta \beta) )\Gamma(0+1)}=\frac{1}{\Gamma(1)}=1,
\]
while, for $j=1$, we have
\[ %\hspace{-3cm}
 \psi_{\beta + \delta \beta,1} = \frac{\Gamma( 1 - (\beta + \delta \beta) )}{\Gamma(- (\beta + \delta \beta) )\Gamma(1+1)}
\]
{\footnotesize\[
 = \frac{-(\beta + \delta \beta)\Gamma(-(\beta + \delta \beta))}{\Gamma(- (\beta + \delta \beta) )\Gamma(2)}=-\frac{\beta}{\Gamma(2)}-\frac{\delta\beta}{\Gamma(2)}=\psi_{\beta,1}+\psi_{\delta\beta,1},
\]}
from which we get that $e_{\beta,\delta\beta,1} = 0$.
\end{proof}

\begin{remark}\label{rmk:zero_cross_product}
Whenever $\beta$ or $\delta\beta$ in~\eqref{eq:psi_funct_decomp} are zero all \mbox{cross-product} terms disappear, therefore it follows that
\[
e_{0,\delta\beta,j}=e_{\beta,0,j}=0, \qquad \forall j\in\mathbb{N}.
\]
\hfill $\circ$
\end{remark}

% \begin{remark}\label{rmk:intro_Dmatrix}
% By applying Lemma~\ref{lem:lemma1} in a element-wise fashion, it follows that
% \begin{align}
%     \begin{split}
%         \Delta^{\beta + \delta \beta}z[k+1] &= z[k+1]+\sum_{j=0}^{k} \psi_{\beta + \delta \beta,j+1}z[k-j], 
%         \label{eq:grunwald_letnikov_expand_Dmatrix}
%     \end{split}
% \end{align}
% where the fractional difference of order $\beta+\delta\beta\in\mathbb{R}^n$ is applied to $z\in\mathbb{R}^n$ at time $k+1\in\mathbb{N}$.
% \end{remark}

Next, by leveraging Lemma~\ref{lem:lemma1} in the context of DTLFON, we provide the solution for steering fractional exponents.

\begin{theorem}\label{thm:theorem2}[Fractional Exponent Steering via Input Modification]
Given the network $\mathcal{N}_w(\alpha,A,B)$ with initial fractional exponents $\alpha\in(\mathbb{R}_0^+)^n$, we can obtain a desired network 
\begin{align}  
    \begin{split}
        \mathcal{N}_w(\alpha_d,A,B):= \quad \Delta^{\alpha_d} x[k+1] = Ax[k] + B\tilde{u}[k],
        \label{eq:network_with_desired_alpha}
    \end{split}
\end{align}
with desired fractional exponents $\alpha_d=\alpha+\delta\alpha\in(\mathbb{R}_0^+)^n$ if the input matrix $B \in \mathbb{R}^{n \times p}$ respects either of the following conditions
\begin{enumerate}[label=(\roman*)]
    \item $\textrm{rank}(B)=n$ (i.e., full row rank), or
    \item $\sum_{j=0}^k \tilde{D}^{\alpha,\delta\alpha}_{j+1} x[k-j] \in \text{Im}(B)$  for all $k=0,\dots,T-1$ if $\textrm{rank}(B)<n$.
\end{enumerate}
% \textcolor{blue}{As in the problem formulation, we restrict attention to nonnegative orders and consider $\alpha,\alpha_d\in(\mathbb{R}_0^+)^n$, so that the Gr\"unwald--Letnikov operator and the bounds of Lemma~\ref{lem:lemma1} are well defined componentwise.}
Notably, $\tilde{D}^{\alpha,\delta\alpha}_{j}\in \mathbb{R}^{n\times n}$ is a diagonal matrix which diagonal terms are function of  ($\alpha,\delta\alpha,j$) 
\[
\tilde{D}^{\alpha,\delta\alpha}_{j} =\text{diag}(\psi_{\delta\alpha_1,j}+e_{\alpha_1,\delta\alpha_1,j}, \ \ \ldots,\ \ \psi_{\delta\alpha_n,j}+e_{\alpha_n,\delta\alpha_n,j}),
\]
where the additional terms are defined as in Lemma~\ref{lem:lemma1}.

When one of these conditions is satisfied, the sequence of $T\in\mathbb{N}$ inputs $\{u[k]\}_{k=0}^{T-1}$ required to steer the fractional-exponent of a network, to a desired value, is given by
\begin{align}
\begin{split}
    Bu[k] = B\tilde{u}[k] - \sum_{j=0}^{k}\tilde{D}^{\alpha,\delta\alpha}_{j+1}x[k-j],
    \label{eq:alpha_steering_control}
\end{split}
\end{align}
where $u[k] \in \mathbb{R}^p$ is the input of the original network $\mathcal{N}_w(\alpha,A,B)$ and $\tilde{u}[k] \in \mathbb{R}^p$ is the input sequence of the desired network $\mathcal{N}_w(\alpha_d,A,B)$.
\hfill $\circ$
\end{theorem}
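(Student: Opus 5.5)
We will mirror the argument of Theorem~\ref{thm:theorem1}: plug the proposed input into the original network, expand the fractional difference of order $\alpha_d=\alpha+\delta\alpha$ componentwise using Lemma~\ref{lem:lemma1}, and check that the two descriptions coincide. The state trajectory is taken to be the one generated by the closed loop, so $x[k-j]$ in \eqref{eq:alpha_steering_control} is available at time $k$.

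\textbf{Step 1: expand the desired operator.} For each node $i$, apply Lemma~\ref{lem:lemma1} with $\beta=\alpha_i$ and $\delta\beta=\delta\alpha_i$. For $j\ge 1$ this gives $\psi_{\alpha_i+\delta\alpha_i,j}=\psi_{\alpha_i,j}+\psi_{\delta\alpha_i,j}+e_{\alpha_i,\delta\alpha_i,j}$. Hence
\[
\Delta^{\alpha_{d,i}}x_i[k+1]=\Delta^{\alpha_i}x_i[k+1]+\sum_{j=0}^{k}\big(\psi_{\delta\alpha_i,j+1}+e_{\alpha_i,\delta\alpha_i,j+1}\big)x_i[k-j].
\]
Stacking over $i$ and using the diagonal matrices $\tilde D^{\alpha,\delta\alpha}_{j+1}$ yields the vector identity
\[
\Delta^{\alpha_d}x[k+1]=\Delta^{\alpha}x[k+1]+\sum_{j=0}^{k}\tilde D^{\alpha,\delta\alpha}_{j+1}x[k-j].
\]

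\textbf{Step 2: substitute the dynamics and the input.} The original network gives $\Delta^{\alpha}x[k+1]=Ax[k]+Bu[k]$. Substituting \eqref{eq:alpha_steering_control} gives
\[
\Delta^{\alpha}x[k+1]=Ax[k]+B\tilde u[k]-\sum_{j=0}^{k}\tilde D^{\alpha,\delta\alpha}_{j+1}x[k-j].
\]
Combining with Step 1, the memory sums cancel and we obtain $\Delta^{\alpha_d}x[k+1]=Ax[k]+B\tilde u[k]$, which is \eqref{eq:network_with_desired_alpha}. Because both operators start from the same initial data and the recursions agree at every $k$, induction on $k$ shows the trajectories coincide for $k=0,\dots,T$.

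\textbf{Step 3: realizability of the input.} The remaining issue is that \eqref{eq:alpha_steering_control} prescribes $Bu[k]$, so we must show some $u[k]\in\mathbb{R}^p$ achieves it. The right-hand side is $B\tilde u[k]-w[k]$ with $w[k]=\sum_{j=0}^{k}\tilde D^{\alpha,\delta\alpha}_{j+1}x[k-j]$.
\begin{itemize}
\item Under (i), $\mathrm{Im}(B)=\mathbb{R}^n$, so $B$ has a right inverse $B^\dagger=B^\intercal(BB^\intercal)^{-1}$, and $u[k]=\tilde u[k]-B^\dagger w[k]$ works.
\item Under (ii), $w[k]\in\mathrm{Im}(B)$ by assumption, so $w[k]=Bv[k]$ for some $v[k]$ (for example $v[k]=B^\dagger w[k]$ with the Moore--Penrose pseudoinverse), and $u[k]=\tilde u[k]-v[k]$ works. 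This holds for every $k=0,\dots,T-1$.
\end{itemize}

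\textbf{Main obstacle.} The delicate point is the self-referential nature of condition (ii). $w[k]$ depends on the closed-loop state, which itself depends on earlier inputs. I would handle this by arguing inductively in $k$: assuming the trajectory up to $k$ is that of the desired network, (ii) is exactly the membership condition evaluated along that trajectory. The indexing in Step 1 also needs care: the $j=0$ weight is $1$ for both orders and contributes no extra term, and the shift $j\mapsto j+1$ in the sum must match the definition of $\tilde D$.
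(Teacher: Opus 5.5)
Your proposal is correct and follows essentially the same route as the paper: expand $\Delta^{\alpha+\delta\alpha}x[k+1]$ componentwise via Lemma~\ref{lem:lemma1} into $\Delta^{\alpha}x[k+1]$ plus the memory sum $\sum_{j=0}^{k}\tilde{D}^{\alpha,\delta\alpha}_{j+1}x[k-j]$, substitute the original dynamics, and let the sums cancel. Your Step~3 makes explicit a point the paper leaves implicit when it simply ``takes'' $B\tilde{u}[k]=Bu[k]+\sum_{j=0}^{k}\tilde{D}^{\alpha,\delta\alpha}_{j+1}x[k-j]$: that some $u[k]$ realizes this, via a right inverse under (i) or a preimage in $\mathrm{Im}(B)$ under (ii), checked inductively along the closed-loop trajectory.
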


\begin{proof}
First, let us exploit $\Delta^{\alpha + \delta \alpha}x[k+1]$, by leveraging~\eqref{eq:grunwald_letnikov}   variable-wise, it readily follows that
\[
\Delta^{\alpha + \delta \alpha}x[k+1] = x[k+1] + \sum_{j=0}^{k} (\tilde{D}^{0,\alpha+\delta\alpha}_{j+1})x[k-j],
\]
where $
\tilde{D}^{0,\alpha + \delta\alpha}_{j} =\text{diag}(\psi_{\alpha_1+\delta\alpha_1,j},\ \ \ldots,\ \ \psi_{\alpha_n+\delta\alpha_n,j})$. Moreover, by applying Lemma~\ref{lem:lemma1} it follows that $
\tilde{D}^{0,\alpha + \delta\alpha}_{j} = \tilde{D}^{0,\alpha}_{j} + \tilde{D}^{\alpha,\delta\alpha}_{j}$ and consequentially
\[
\Delta^{\alpha + \delta \alpha}x[k+1] = x[k+1] + \sum_{j=0}^{k} (\tilde{D}^{0,\alpha}_{j+1} + \tilde{D}^{\alpha,\delta\alpha}_{j+1})x[k-j].
\]
Next, observe that
\begin{align}
\Delta^{\alpha + \delta \alpha}x[k+1] &= x[k+1] + \sum_{j=0}^{k}\tilde{D}^{0,\alpha}_{j+1}x[k-j] \nonumber\\
&\quad + \sum_{j=0}^{k}\tilde{D}^{\alpha,\delta\alpha}_{j+1}x[k-j], \nonumber
\end{align}

and adopting the expression presented in~\eqref{eq:grunwald_letnikov}, it follows that
\begin{equation}\label{eq:proof2lemma1}
    \Delta^{\alpha + \delta \alpha}x[k+1] = \Delta^{\alpha}x[k+1] + \sum_{j=0}^{k}\tilde{D}^{\alpha,\delta\alpha}_{j+1}x[k-j]. 
\end{equation}     

Now, by replacing $\Delta^{\alpha}x[k+1]$ with~\eqref{eq:fosDynInput}, it follows that
    \[
    \Delta^{\alpha + \delta \alpha}x[k+1]= Ax[k] + Bu[k]  + \sum_{j=0}^{k}\tilde{D}^{\alpha,\delta\alpha}_{j+1}x[k-j],
    \]
    from which we can take $B\tilde{u}[k]=Bu[k]  + \sum_{j=0}^{k}\tilde{D}^{\alpha,\delta\alpha}_{j+1}x[k-j]$ to obtain the desired network $\mathcal{N}_w(\alpha_d,A,B)$.
\end{proof}

% \begin{remark}\label{rmk:truncation}
%    The implementation of equation~\eqref{eq:alpha_steering_control} requires maintaining state history, characteristic of fractional-order networks. Following~\cite{Reed2022}, the infinite sum can be truncated to $J$ terms ($J >> 0$) since coefficients $\psi(\alpha_i, j)$ decay as $\mathcal{O}(j^{-\alpha_i-1})$, enabling accurate approximation without significant loss of precision as long as the state remains bounded.
   
% \hfill $\circ$
% \end{remark}

\begin{proposition}[Memory truncation error]
Let $\alpha_i, \delta\alpha_i \in \mathbb{R}^+_0$ be the i-th entry of $\alpha$ and $\delta\alpha \in (\mathbb{R}^+_0)^n$, and assume each state component to be bounded, i.e., $|x_i[k]| \le b_{x,i}$, with $b_{x,i} \in \mathbb{R}^+$. Truncating $\sum_{j=0}^{\infty}\tilde D^{\alpha,\delta\alpha}_{j+1}x[k-j]$ to $J\in \mathbb{N}$ terms yields tail
\begin{equation*}
\Big|\sum_{j=J+1}^{\infty}\tilde{d}_{\alpha_i,\delta\alpha_i,j+1}\,x_i[k-j]\Big|
\le b_{x,i}\,\tilde c\, \delta\alpha_i \sum_{j=J+1}^\infty j^{-\alpha_i+1} \ln j,
\end{equation*}
where $\tilde{d}_{\alpha_i,\delta\alpha_i,j} = \psi_{\delta\alpha_i,j} + e_{\alpha_i,\delta\alpha_i,j}$ is the $i$-th diagonal entry of $\tilde{D}^{\alpha,\delta\alpha}_{j}$, and $\tilde c \in \mathbb{R}$ is a fixed constant. In particular, if $\alpha_i\ge \underline{\alpha}>0$ for all $i$, then
\[
\sum_{j=J+1}^\infty j^{-\alpha_i+1}\ln j \le C\,J^{-\underline{\alpha}+1}\ln J
\]
for some constant $C>0$, so the truncation error decays at least on the order of $J^{-\underline{\alpha}+1}\ln J$ as $J\to\infty$.
\end{proposition}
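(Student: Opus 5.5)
The plan is to avoid the recursion for $e_{\alpha_i,\delta\alpha_i,j}$ from Lemma~\ref{lem:lemma1} altogether and use its closed-form consequence. Since $\psi_{\alpha_i+\delta\alpha_i,j}=\psi_{\alpha_i,j}+\psi_{\delta\alpha_i,j}+e_{\alpha_i,\delta\alpha_i,j}$, we have
\[
\tilde d_{\alpha_i,\delta\alpha_i,j}=\psi_{\alpha_i+\delta\alpha_i,j}-\psi_{\alpha_i,j}.
\]
So the weight is simply a difference of Gr\"unwald--Letnikov coefficients in the order variable. For fixed $j\ge1$, the map $\beta\mapsto\psi_{\beta,j}=\frac{1}{\Gamma(-\beta)}\frac{\Gamma(j-\beta)}{\Gamma(j+1)}$ is smooth on $[0,\infty)$, because $1/\Gamma$ is entire; at integer $\beta<j$ it is a polynomial in $\beta$. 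The mean value theorem then gives $\tilde d_{\alpha_i,\delta\alpha_i,j}=\delta\alpha_i\,\partial_\beta\psi_{\beta,j}|_{\beta=\xi_j}$ for some $\xi_j\in[\alpha_i,\alpha_i+\delta\alpha_i]$. This is where the factor $\delta\alpha_i$ in the statement comes from.

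Next I would bound $\partial_\beta\psi_{\beta,j}$ uniformly for $\beta$ in the compact interval $[\alpha_i,\alpha_i+\delta\alpha_i]$. Differentiating the product gives
\[
\partial_\beta\psi_{\beta,j}=\Big(\tfrac{d}{d\beta}\tfrac{1}{\Gamma(-\beta)}\Big)\frac{\Gamma(j-\beta)}{\Gamma(j+1)}-\frac{1}{\Gamma(-\beta)}\frac{\Gamma(j-\beta)}{\Gamma(j+1)}\,\digamma_0(j-\beta),
\]
where $\digamma_0$ denotes the digamma function. I would write this out in words in the final text, since no digamma macro is defined in the paper. Two standard estimates then apply:
\begin{itemize}
\item the Gamma-ratio (Wendel/Stirling) bound $\Gamma(j-\beta)/\Gamma(j+1)\le c_1 j^{-\beta-1}$ for $j\ge j_0(\beta)$;
\item the digamma estimate $|\digamma_0(j-\beta)|\le \ln j + c_2$.
\end{itemize}
The prefactors $1/\Gamma(-\beta)$ and its derivative are continuous, hence bounded on the compact interval. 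Combining these yields, for $j\ge2$,
\[
|\partial_\beta\psi_{\beta,j}|\le \tilde c\, j^{-\alpha_i-1}\ln j .
\]
I anticipate that the main technical obstacle is making the constants uniform in $\beta$, including near the finitely many small $j$ where $j-\beta$ may be nonpositive. I would handle those finitely many $j$ by continuity and absorb them into $\tilde c$, assuming $J\ge1$ so that $\ln j>0$ on the tail.

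The tail estimate then follows from the triangle inequality and $|x_i[k-j]|\le b_{x,i}$:
\[
\Big|\sum_{j>J}\tilde d_{\alpha_i,\delta\alpha_i,j+1}x_i[k-j]\Big|\le b_{x,i}\tilde c\,\delta\alpha_i\sum_{j>J} j^{-\alpha_i-1}\ln j .
\]
Here the index shift $j+1\mapsto j$ only changes constants. Since $j^{-\alpha_i-1}\le j^{-\alpha_i+1}$, this implies the displayed bound of the proposition.

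For the ``in particular'' clause, I would not sum the weaker series $\sum j^{-\alpha_i+1}\ln j$. That series converges only when $\alpha_i>2$, and even then its tail behaves like $J^{-\alpha_i+2}\ln J$. Instead I would keep the sharper bound and compare it with an integral:
\[
\sum_{j>J} j^{-\alpha_i-1}\ln j\le\int_J^\infty t^{-\underline\alpha-1}\ln t\,dt=\frac{\ln J}{\underline\alpha J^{\underline\alpha}}+\frac{1}{\underline\alpha^2J^{\underline\alpha}}\le C J^{-\underline\alpha}\ln J\le C J^{-\underline\alpha+1}\ln J .
\]
This holds for $J\ge3$, since $t^{-\underline\alpha-1}\ln t$ is decreasing for $t\ge e^{1/(\underline\alpha+1)}$. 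This gives the claimed decay rate for the truncation error.
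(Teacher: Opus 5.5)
Your proposal is correct, and it takes a genuinely different and more rigorous route than the paper. Both arguments start from the identity $\tilde d_{\alpha_i,\delta\alpha_i,j}=\psi_{\alpha_i+\delta\alpha_i,j}-\psi_{\alpha_i,j}$ supplied by Lemma~\ref{lem:lemma1}, and both end with the triangle inequality and $|x_i[k-j]|\le b_{x,i}$. They differ in how the coefficient difference is estimated.

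The paper quotes from the literature that the Gr\"unwald--Letnikov weights are $\mathcal{O}(j^{-\alpha_i+1})$. It then treats the two $\mathcal{O}$-terms as equal to their leading powers, writes $\tilde d\approx j^{-\alpha_i+1}(j^{-\delta\alpha_i}-1)$, and Taylor-expands for small $\delta\alpha_i$ to extract $\delta\alpha_i\ln j$, stated only for sufficiently large $j$. That step is heuristic: the cancellation producing the factor $\delta\alpha_i$ cannot be read off from two separate $\mathcal{O}$-bounds. The quoted rate is also two powers weaker than the true decay $j^{-\alpha_i-1}$.

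Your mean-value argument in the order variable, with uniform Gamma-ratio and digamma bounds on $[\alpha_i,\alpha_i+\delta\alpha_i]$, turns that step into an actual proof. It removes the smallness assumption on $\delta\alpha_i$, handles the finitely many small $j$, and gives the sharper estimate $|\tilde d_{\alpha_i,\delta\alpha_i,j}|\le\tilde c\,\delta\alpha_i\,j^{-\alpha_i-1}\ln j$.

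You are also right that the intermediate inequality in the ``in particular'' clause cannot hold as written. The series $\sum_{j>J}j^{-\alpha_i+1}\ln j$ diverges for $\alpha_i\le 2$. For $\alpha_i=\underline{\alpha}>2$ its tail is of order $J^{2-\underline{\alpha}}\ln J$, which is not $\mathcal{O}(J^{-\underline{\alpha}+1}\ln J)$. The paper's proof never addresses that clause. Keeping your sharper exponent before comparing with $\int_J^\infty t^{-\underline{\alpha}-1}\ln t\,dt$ is what makes the stated decay conclusion valid, and in fact yields the stronger rate $J^{-\underline{\alpha}}\ln J$.

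One small cleanup: justify smoothness of $\beta\mapsto\psi_{\beta,j}$ via $\psi_{\beta,j}=\frac{1}{j!}\prod_{m=0}^{j-1}(m-\beta)$, a degree-$j$ polynomial in $\beta$. Entireness of $1/\Gamma$ alone does not account for the poles of $\Gamma(j-\beta)$ at $\beta\ge j$, and the polynomial form also makes the small-$j$ case immediate.

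In short, the paper's route buys brevity, while yours buys rigor, uniformity in $\delta\alpha_i$, and the correct decay exponent.
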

% = O(\delta\alpha \cdot J^{-\alpha+1} \ln J)   %% for future purpose

\begin{proof}
By the triangle inequality and the bound $|x_i[k-j]| \le b_{x,i}$, we have
\begin{align*}
\Big|\sum_{j=J+1}^{\infty}\tilde{d}_{\alpha_i,\delta\alpha_i,j+1}\,x_i[k-j]\Big|
&\le \sum_{j=J+1}^{\infty}|\tilde{d}_{\alpha_i,\delta\alpha_i,j+1}| \, |x_i[k-j]|\\
&\le b_{x,i}\sum_{j=J+1}^{\infty}|\tilde{d}_{\alpha_i,\delta\alpha_i,j+1}|.
\end{align*}

From Lemma 1, we have 
\begin{equation*}
\tilde{D}^{\alpha,\delta\alpha}_{j} = \tilde{D}^{0,\alpha+\delta\alpha}_{j} - \tilde{D}^{0,\alpha}_{j},
\end{equation*}
with i-th component
\begin{equation*}
\tilde{d}_{\alpha_i,\delta\alpha_i,j} = \psi_{\delta\alpha_i,j} + e_{\alpha_i,\delta\alpha_i,j}.
\end{equation*}

From~\cite{Alessandretti2020,Sopasakis2015} we have that each of the i-th entries of $\tilde{D}^{0,\alpha}_j$ can be computed in the order $\mathcal{O}(j^{-\alpha_i+1})$, it follows that
\begin{align*}
\tilde{d}_{\alpha_i,\delta\alpha_i,j} &= \mathcal{O}(j^{-(\alpha_i+\delta\alpha_i)+1}) - \mathcal{O}(j^{-\alpha_i+1})\\
&= j^{-(\alpha_i+\delta\alpha_i)+1} - j^{-\alpha_i+1},
\end{align*}
by factoring out $j^{-\alpha_i+1}$, it follows
\begin{equation*}
\tilde{d}_{\alpha_i,\delta\alpha_i,j} = j^{-\alpha_i+1}(j^{-\delta\alpha_i} - 1).
\end{equation*}

Notably, for small $\delta\alpha_i$, using the Taylor expansion 
\begin{equation*}
j^{-\delta\alpha_i} = e^{-\delta\alpha_i \ln j} \approx 1 - \delta\alpha_i \ln j + \mathcal{O}((\delta\alpha_i)^2),
\end{equation*}
we obtain
\begin{equation*}
\tilde{d}_{\alpha_i,\delta\alpha_i,j} = j^{-\alpha_i+1}(-\delta\alpha_i \ln j + \mathcal{O}((\delta\alpha_i)^2)).
\end{equation*}

Therefore, for sufficiently large $j$ we obtain
\begin{equation*}
|\tilde{d}_{\alpha_i,\delta\alpha_i,j+1}| \le \tilde{c} \, \delta\alpha_i \, j^{-\alpha_i+1} \ln j,
\end{equation*}
for some constant $\tilde{c} \in \mathbb{R}$. 

Consequently:
\begin{equation*}
\Big|\sum_{j=J+1}^{\infty}\tilde{d}_{\alpha_i,\delta\alpha_i,j+1}\,x_i[k-j]\Big|
\le b_{x,i}\,\tilde{c}\, \delta\alpha_i \sum_{j=J+1}^\infty j^{-\alpha_i+1} \ln j.
\end{equation*}
\end{proof}

\begin{remark}\label{rmk:alpha_space}
   Notably, the input sequence in~\eqref{eq:alpha_steering_control} may lead to a change in either $x[k]\in\mathbb{R}^n$ and/or in the fractional exponents~$\alpha\in\mathbb{R}^n$. It is noteworthy to notice that, in general, not all the elements of $\alpha$ necessarily require to be steered. 
   % For instance, as demonstrated in \cite{Reed2023}, stabilizing a DTLFON may only entail significant alterations in a subset of $\alpha$.

   Moreover, the reachable space determined by~\eqref{eq:alpha_steering_control} is computed as a combination of the input matrix $B$ and $\alpha$. Hence, depending on how many $\alpha$ are required to be steered, it is possible for the matrix $B$ not to have full-rank.
\hfill $\circ$
\end{remark}

\subsection{Combined Steering of Coupling Matrix and Fractional Exponents ($\alpha,A$)}\label{subsec:combined}

We now present the main result that combines the steering of both coupling matrix and fractional exponents.

\begin{theorem}\label{thm:theorem3}[Combined Steering of Coupling Matrix and Fractional Exponents]
Given the DTLFON $\mathcal{N}_w(\alpha, A, B)$ in~\eqref{eq:fosDynInput}, the network can be steered to the desired configuration $\mathcal{N}_w(\alpha_d, A_d, B)$ through a combined control strategy if the following conditions are satisfied:
\begin{enumerate}[label=(\roman*)]
    \item Theorem~\ref{thm:theorem1} conditions hold for steering $A \rightarrow A_d$, and
    \item Theorem~\ref{thm:theorem2} conditions hold for steering $\alpha \rightarrow \alpha_d$.
\end{enumerate}
The combined control input sequence $\{u[k]\}_{k=0}^{T-1}$ is given by
\begin{align}
    Bu[k] = B\tilde{u}[k] - \sum_{j=0}^k \tilde{D}^{\alpha,\delta\alpha}_{j+1} x[k-j] - BKx[k],
    \label{eq:combined_control}
\end{align}
where $\tilde{u}[k] \in \mathbb{R}^p$ is the input of the desired network, the second term handles fractional exponent steering (from Theorem~\ref{thm:theorem2}), and the third term provides state feedback for coupling matrix steering (from Theorem~\ref{thm:theorem1}).
\hfill $\circ$
\end{theorem}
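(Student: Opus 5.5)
The plan is to verify the claim by direct substitution, reusing the two identities already established in the proofs of Theorem~\ref{thm:theorem1} and Theorem~\ref{thm:theorem2}.

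\textbf{Step 1 (rewrite the desired operator).} Start from the target operator $\Delta^{\alpha_d}x[k+1]=\Delta^{\alpha+\delta\alpha}x[k+1]$. Apply identity~\eqref{eq:proof2lemma1}, which follows from Lemma~\ref{lem:lemma1} applied componentwise, to obtain
\[
\Delta^{\alpha+\delta\alpha}x[k+1]=\Delta^{\alpha}x[k+1]+\sum_{j=0}^{k}\tilde D^{\alpha,\delta\alpha}_{j+1}x[k-j].
\]

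\textbf{Step 2 (substitute the dynamics and the combined input).} Replace $\Delta^{\alpha}x[k+1]$ by the original dynamics $Ax[k]+Bu[k]$ from~\eqref{eq:fosDynInput}. Then insert the combined law~\eqref{eq:combined_control} for $Bu[k]$. The memory sums cancel exactly, leaving
\[
\Delta^{\alpha_d}x[k+1]=(A-BK)x[k]+B\tilde u[k].
\]
By condition (i), Theorem~\ref{thm:theorem1} provides a $K$ with $BK=A-A_d$. Hence the right-hand side equals $A_dx[k]+B\tilde u[k]$, which is precisely $\mathcal{N}_w(\alpha_d,A_d,B)$.

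\textbf{Step 3 (realizability of the input).} It remains to check that the right-hand side of~\eqref{eq:combined_control} lies in $\mathrm{Im}(B)$, so that some $u[k]\in\mathbb{R}^p$ realizes it. The terms $B\tilde u[k]$ and $BKx[k]$ are in $\mathrm{Im}(B)$ trivially. For the memory term there are two cases.
\begin{itemize}
\item Under condition (ii)(i) of Theorem~\ref{thm:theorem2}, $\mathrm{rank}(B)=n$, so $\mathrm{Im}(B)=\mathbb{R}^n$ and the term is automatically in the image.
\item Under condition (ii)(ii), the memory term is assumed to lie in $\mathrm{Im}(B)$ for $k=0,\dots,T-1$.
\end{itemize}
Since $\mathrm{Im}(B)$ is a subspace, the sum of the three terms then lies in it. One can therefore solve for $u[k]$, for instance as $u[k]=\tilde u[k]-Kx[k]-B^{\dagger}\sum_{j}\tilde D^{\alpha,\delta\alpha}_{j+1}x[k-j]$, using a right inverse or pseudoinverse of $B$ on its image.

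\textbf{Main obstacle.} The only subtle point is the trajectory dependence in condition (ii). The states $x[k-j]$ in that condition must be those generated by the closed loop under the combined law, that is, by the $(\alpha_d,A_d)$ network. They are not the states of the $\alpha$-only steering scenario of Theorem~\ref{thm:theorem2}. I would interpret condition (ii) of Theorem~\ref{thm:theorem3} in this sense, with the states being those of the actual closed-loop trajectory, and state this explicitly in the proof. With that reading, the verification above is exact and no truncation argument is needed.
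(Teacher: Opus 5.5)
Your proposal is correct and follows the same route as the paper. The paper's proof just observes that the state-feedback term of Theorem~\ref{thm:theorem1} and the memory-compensation term of Theorem~\ref{thm:theorem2} add in the input space, and your substitution via~\eqref{eq:proof2lemma1} (memory sums cancel, $A-BK=A_d$), together with your checks that the input lies in $\mathrm{Im}(B)$ and that condition (ii) is read along the actual $(\alpha_d,A_d)$ closed-loop trajectory, is that argument written out more carefully than the paper does.
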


\begin{proof}
The proof follows directly from the combination of Theorems~\ref{thm:theorem1} and~\ref{thm:theorem2}. The coupling matrix steering requires the state feedback from Theorem~\ref{thm:theorem1}, while the fractional exponent steering requires the input modification from Theorem~\ref{thm:theorem2}. Since these modifications are additive in the input space, the combined control law~\eqref{eq:combined_control} achieves both objectives simultaneously, provided the individual conditions are satisfied.
\end{proof}

\begin{remark}\label{rmk:reachable_set}
The reachable set of $(\alpha_d, A_d)$ configurations is determined by the intersection of the individual reachable sets from Theorems~\ref{thm:theorem1} and~\ref{thm:theorem2}. This intersection depends on the structure of the input matrix $B \in \mathbb{R}^{n \times p}$ and the initial network parameters.
\hfill $\circ$
\end{remark}

\subsection{Network Parameter Steering with Simultaneous State Control ($\alpha,A,x$)}\label{subsec:param_state}

Having established methods for steering both the coupling matrix $A$ and fractional exponents $\alpha$, we now address the comprehensive problem of simultaneously achieving the desired network parameter configuration $(\alpha_d, A_d)$ while controlling the network state $x$ to reach a specific target value $x_d$. 

This approach recognizes that in practical applications, we often need to both reconfigure the network dynamics through parameter steering and guide the network state to a desired operating point. First, we establish a fundamental relationship that enables this combined approach.

\begin{proposition}\label{prop:defractionalization}[Equivalent Linear Representation of Fractional Networks]
Consider the DTLFON $\mathcal{N}_w(\alpha_d,A_d,B)$, it can be represented in an equivalent form that separates the linear dynamics from the fractional memory effects
\begin{equation}\label{eq:defractionalized}
    x[k+1]=A_d x[k]+B\tilde{u}[k]-\sum_{j=0}^{k} \tilde{D}^{0,\alpha_d}_{j+1}x[k-j],
\end{equation}
where $A_d \in \mathbb{R}^{n \times n}$ is the desired coupling matrix and the term $\sum_{j=0}^{k} \tilde{D}^{0,\alpha_d}_{j+1}x[k-j]$ captures the fractional memory effects.
\hfill $\circ$
\end{proposition}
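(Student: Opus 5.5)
The identity follows by unfolding the Gr\"unwald--Letnikov operator in~\eqref{eq:grunwald_letnikov} row by row and moving the memory sum to the right-hand side. I would work componentwise. Write the dynamics of $\mathcal{N}_w(\alpha_d,A_d,B)$ as in~\eqref{eq:network_with_desired_A}--\eqref{eq:network_with_desired_alpha}, namely $\Delta^{\alpha_d}x[k+1]=A_dx[k]+B\tilde u[k]$. For each entry $i$, the definition~\eqref{eq:grunwald_letnikov} with order $(\alpha_d)_i$ gives
\[
\Delta^{(\alpha_d)_i}x_i[k+1]=x_i[k+1]+\sum_{j=0}^{k}\psi((\alpha_d)_i,j+1)\,x_i[k-j].
\]

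The next step is to stack these $n$ scalar identities into vector form. Since the coefficient multiplying $x_i[k-j]$ depends only on $i$ and $j$, the stacked sum is $\sum_{j=0}^{k}\tilde D^{0,\alpha_d}_{j+1}x[k-j]$. Here $\tilde D^{0,\alpha_d}_{j}=\mathrm{diag}(\psi_{(\alpha_d)_1,j},\dots,\psi_{(\alpha_d)_n,j})$, exactly as introduced in the proof of Theorem~\ref{thm:theorem2}. I would note that this is consistent with Lemma~\ref{lem:lemma1} and Remark~\ref{rmk:zero_cross_product}. Taking base order $0$ and increment $\alpha_d$ gives $e_{0,\alpha_d,j}=0$, so $\tilde D^{0,\alpha_d}_j$ contains only the pure weights $\psi_{\alpha_d,j}$. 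This justifies the notation.

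Hence $\Delta^{\alpha_d}x[k+1]=x[k+1]+\sum_{j=0}^{k}\tilde D^{0,\alpha_d}_{j+1}x[k-j]$. Substituting this into the network equation and isolating $x[k+1]$ yields~\eqref{eq:defractionalized}.

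For the converse direction of the equivalence, I would run the same algebra backwards. Any trajectory satisfying~\eqref{eq:defractionalized} satisfies the fractional equation, because the map between the two forms is just the addition or subtraction of the same known finite sum at each $k$. Given $x[0]$ and $\{\tilde u[k]\}$, both recursions therefore generate the same unique trajectory.

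No step here is a real obstacle. The only care point is bookkeeping of the index shift: the weight on $x[k-j]$ is $\psi_{\cdot,j+1}$, not $\psi_{\cdot,j}$. The $j=0$ weight $\psi_{\cdot,0}=1$ is the one absorbed into the leading $x[k+1]$ term, so it must not be counted again inside the sum.
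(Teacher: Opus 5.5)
Your proposal is correct and follows essentially the paper's argument. The paper obtains $\Delta^{\alpha_d}x[k+1]=\Delta^{0}x[k+1]+\sum_{j=0}^{k}\tilde D^{0,\alpha_d}_{j+1}x[k-j]$ from Lemma~\ref{lem:lemma1} with $\beta=0$, $\delta\beta=\alpha_d$, then uses $\Delta^0x[k+1]=x[k+1]$. You instead unfold~\eqref{eq:grunwald_letnikov} directly and use Remark~\ref{rmk:zero_cross_product} only to match the notation $\tilde D^{0,\alpha_d}_j$; both routes end in the same substitution and rearrangement, and your remark on the converse direction is a harmless addition.
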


\begin{proof}
Starting from the desired fractional network $\Delta^{\alpha_d}x[k+1] = A_d x[k] + B\tilde{u}[k]$, we apply the decomposition from Lemma~\ref{lem:lemma1} with $\beta = 0$ and $\delta\beta = \alpha_d$:
\[
\Delta^{0+\alpha_d} x[k+1]= \Delta^0x[k+1]+\sum_{j=0}^{k}\tilde{D}^{0,\alpha_d}_{j+1}x[k-j],
\]
which, when substituted into the desired network equation, yields:
\[
\Delta^0x[k+1]+\sum_{j=0}^{k}\tilde{D}^{0,\alpha_d}_{j+1}x[k-j]= A_d x[k]+B\tilde{u}[k],
\]
with $\Delta^0x[k+1]=x[k+1]$ from~\cite{Reed2022}, we obtain~\eqref{eq:defractionalized}.
\end{proof}

\begin{remark}\label{rmk:defractionalization_interpretation}
Proposition~\ref{prop:defractionalization} reveals that steering network parameters $(\alpha_d, A_d)$ while achieving the desired state $x_d$ is equivalent to controlling a linear network with coupling matrix $A_d$ subject to a time-varying disturbance term that depends on the fractional memory effects.
\hfill $\circ$
\end{remark}

Building on this proposition, we present the main result for simultaneous network parameter steering and state control.

\begin{theorem}\label{thm:theorem4}[Simultaneous Parameter and State Control]
Consider the DTLFON $\mathcal{N}_w(\alpha, A, B)$ in~\eqref{eq:fosDynInput} with initial state $x_0 = x[0]$. The network can be steered to the desired parameter configuration $\mathcal{N}_w(\alpha_d, A_d, B)$ while simultaneously driving the state from $x_0$ to a desired final state $x_d = x[T]$ in $T$ time steps if and only if there exists a sequence of inputs $\{\tilde{u}[k]\}_{k=0}^{T-1}$ satisfying the combined control law from Theorem~\ref{thm:theorem3} such that 

\vspace{0.5pt}
\begin{align}
    \begin{split}
        \{\tilde{u}[k]\}_{k=0}^{T-1} = {C^{0,\alpha_d}_{T}}^{\dag} \left( x_d- G^{0,\alpha_d}_{T}x_0\right),
    \label{eq:state_control_solution}
    \end{split}
\end{align} 
here the fractional reachability matrix $C^{0,\alpha_d}_{k}\in\mathbb{R}^{n\times kp}$, with ${C^{0,\alpha_d}_{T}}^{\dag} = ( (C^{0,\alpha_d}_{T})^\intercal C^{0,\alpha_d}_{T})^{-1}(C^{0,\alpha_d}_{T})^\intercal$, is given by
\[
  C^{0,\alpha_d}_{T} = \begin{bmatrix} G^{0,\alpha_d}_{0}B & G^{0,\alpha_d}_{1}B & \ldots & G^{0,\alpha_d}_{T-1}B \end{bmatrix}.
\]

The fractional state transition matrix $G^{0,\alpha_d}_{k}\in\mathbb{R}^{n\times n}$ is defined recursively as
\[
    G^{0,\alpha_d}_{k}=\left\{\begin{array}{ll}
             & \mathbb{I}_{n\times n}, \quad\quad\quad\quad\quad\quad\quad\quad  k=0, \\
             & \sum_{j=0}^{k-1} \tilde{A}^{0,\alpha_d}_{d,j} G^{0,\alpha_d}_{k-1-j}, \quad  k>0,
        \end{array}\right.
\]
where the modified network matrix $\tilde{A}^{0,\alpha_d}_{d,j}\in\mathbb{R}^{n \times n}$ incorporates both the desired coupling matrix and fractional effects as follows
\[
\tilde{A}^{0,\alpha_d}_{d,j}=\left\{\begin{array}{ll}
             & A_d-\tilde{D}^{0,\alpha_d}_{j+1}, \quad  j=0, \\
             & -\tilde{D}^{0,\alpha_d}_{j+1}, \qquad\quad  j>0.
        \end{array}\right.
\]
\hfill $\circ$
\end{theorem}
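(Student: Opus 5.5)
The plan is to reduce the problem, via Theorem~\ref{thm:theorem3}, to steering the state of the \emph{desired} network $\mathcal{N}_w(\alpha_d,A_d,B)$. I would then derive a closed-form variation-of-constants formula for that network from Proposition~\ref{prop:defractionalization}, and read off reachability as a linear-algebra condition on the stacked input vector.

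\textbf{Step 1 (reduction).} Assume the conditions of Theorem~\ref{thm:theorem3} hold and apply the combined law~\eqref{eq:combined_control}. Then for every $k\ge 0$ the closed loop obeys $\Delta^{\alpha_d}x[k+1]=A_dx[k]+B\tilde u[k]$. Parameter steering is therefore automatic, and the remaining freedom is exactly the sequence $\{\tilde u[k]\}$. By Proposition~\ref{prop:defractionalization}, this sequence drives
\[
x[k+1]=\sum_{j=0}^{k}\tilde A^{0,\alpha_d}_{d,j}\,x[k-j]+B\tilde u[k],
\]
which uses the definition of $\tilde A^{0,\alpha_d}_{d,j}$ to absorb $A_d$ into the $j=0$ term.

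\textbf{Step 2 (transition formula).} I would prove by strong induction on $k$ that
\[
x[k]=G^{0,\alpha_d}_{k}x_0+\sum_{i=0}^{k-1}G^{0,\alpha_d}_{k-1-i}B\tilde u[i].
\]
The base case $k=0$ is immediate from $G_0=\mathbb{I}$. For the inductive step, substitute the hypothesis for each $x[k-j]$ into the recursion of Step~1 and collect terms.
\begin{itemize}
\item The coefficient of $x_0$ is $\sum_{j=0}^{k}\tilde A_{d,j}G_{k-j}$, which equals $G_{k+1}$ by the recursive definition.
\item For a fixed $i<k$, the coefficient of $B\tilde u[i]$ is $\sum_{j=0}^{k-1-i}\tilde A_{d,j}G_{k-1-i-j}$, which equals $G_{k-i}$ by the same recursion after reindexing.
\item The newest input $\tilde u[k]$ enters with coefficient $\mathbb{I}=G_0$.
\end{itemize}
I expect this index bookkeeping (exchanging the double sum and restricting $j\le k-1-i$) to be the most error-prone step, and I would check it carefully.

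\textbf{Step 3 (reachability).} Evaluating the formula at $k=T$ gives
\[
x[T]=G^{0,\alpha_d}_{T}x_0+C^{0,\alpha_d}_{T}\,\mathbf{u},
\qquad
\mathbf{u}=[\tilde u[T-1]^\intercal\ \cdots\ \tilde u[0]^\intercal]^\intercal,
\]
so the block $G_iB$ multiplies $\tilde u[T-1-i]$; I would state this ordering convention explicitly. Hence $x[T]=x_d$ holds if and only if $x_d-G_Tx_0\in\mathrm{Im}(C^{0,\alpha_d}_T)$. When this holds, the pseudoinverse in~\eqref{eq:state_control_solution} returns a solution, namely the minimum-norm one. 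Combined with Step~1, this gives both directions of the ``if and only if'': parameters are steered by the combined law, and the state is steered exactly when a suitable $\tilde u$ exists.

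\textbf{Main obstacle.} The displayed formula $(C^\intercal C)^{-1}C^\intercal$ requires $C$ to have full column rank. This is incompatible with the typical case $Tp>n$, which is precisely when reachability is generic. I would therefore interpret ${C^{0,\alpha_d}_T}^{\dag}$ as the Moore--Penrose pseudoinverse. When $C^{0,\alpha_d}_T$ has full row rank $n$, this equals $C^\intercal(CC^\intercal)^{-1}$, and $x_d$ is then reachable from every $x_0$. Under this interpretation the statement is consistent, and it reduces to classical LTI controllability when $\alpha_d=0$: the $\tilde D^{0,0}_{j}$ vanish by Remark~\ref{rmk:zero_cross_product}, so $G_k=A_d^k$.
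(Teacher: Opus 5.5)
Your proposal is correct and follows the paper's route. You invoke the combined law of Theorem~\ref{thm:theorem3}, rewrite the closed loop via Proposition~\ref{prop:defractionalization} as $x[k+1]=\sum_{j=0}^{k}\tilde A^{0,\alpha_d}_{d,j}x[k-j]+B\tilde u[k]$, obtain $x[k]=G^{0,\alpha_d}_kx_0+\sum_{j=0}^{k-1}G^{0,\alpha_d}_{k-1-j}B\tilde u[j]$ (which the paper merely asserts as ``standard'' and you verify by a correct strong induction), and set $k=T$. Your extra observations are sound refinements rather than deviations: the inputs must be stacked in reverse time order, and ${C^{0,\alpha_d}_T}^{\dagger}$ must be read as the Moore--Penrose inverse, because $C^\intercal C$ is singular whenever $Tp>n$ (as in the paper's own example with $T=150$, $p=n=3$); this inverse coincides with the displayed $(C^\intercal C)^{-1}C^\intercal$ whenever that exists, and equals $C^\intercal(CC^\intercal)^{-1}$ under the full-row-rank condition of Remark~\ref{rmk:feasibility_conditions}.
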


\begin{proof}
After applying the combined control law~\eqref{eq:combined_control} to achieve the desired network parameters $(\alpha_d, A_d)$, the network dynamics are governed by~\eqref{eq:defractionalized}. This can be written recursively as
\[
x[k+1] = \sum_{j=0}^{k} \tilde{A}^{0,\alpha_d}_{d,j} x[k-j] + B\tilde{u}[k].
\]

Following standard controllability analysis for this time-varying network, we can express the state evolution as
\[
x[k] = G^{0,\alpha_d}_{k}x_0 + \sum_{j=0}^{k-1} G^{0,\alpha_d}_{k-1-j}B\tilde{u}[j].
\]

Setting $k=T$ and requiring $x[T] = x_d$ yields the input sequence~\eqref{eq:state_control_solution}.
\end{proof}

\begin{remark}\label{rmk:feasibility_conditions}
The feasibility of simultaneously achieving the desired network parameters $(\alpha_d, A_d)$ and state target $x_d$ depends on:
\begin{enumerate}[label=(\roman*)]
    \item The individual conditions from Theorems~\ref{thm:theorem1} and~\ref{thm:theorem2} for parameter steering;
    \item If $\operatorname{rank}(C^{0,\alpha_d}_{T})=n$, then $x_d-G^{0,\alpha_d}_{T}x_0 \in \mathrm{Im}(C^{0,\alpha_d}_{T})$ and the equation admits solutions; one such solution of minimum Euclidean norm is given by~\eqref{eq:state_control_solution} with the left pseudoinverse ${C^{0,\alpha_d}_{T}}^\dag$ defined above. Conversely, if this rank condition fails, there exist target states $x_d$ that cannot be reached in $T$ steps.
\end{enumerate}
\hfill $\circ$
\end{remark}

\begin{remark}\label{rmk:lti_consistency}
When all the elements of $\alpha_d$ are equal to zero (i.e., the desired network is linear time-invariant), the fractional reachability matrix reduces to the classical controllability matrix, ensuring consistency with classical linear control theory
\[
C^{0,0}_{T} = \begin{bmatrix}
    B & A_d B & A_d^2 B & \ldots & A_d^{T-1} B
\end{bmatrix}.
\]
\hfill $\circ$
\end{remark}

\subsection{Physical Aware Energy-Constrained Control}\label{subsec:energy_qp}

While Theorem~\ref{thm:theorem4} provides the theoretical foundation for simultaneous parameter and state control, its direct implementation must account for physical and computational limitations. In particular, real-world systems are subject to actuator saturation, state safety bounds, and the need to truncate the infinite memory inherent in fractional-order dynamics.

In this subsection we formulate an optimization-based control design that explicitly incorporates these constraints. The resulting problem captures actuator limits, finite-memory approximations, and truncation-induced uncertainty, while remaining consistent with the reachability conditions and structural properties established in the previous sections.

Given the desired network configuration $\mathcal{N}_w(\alpha_d, A_d, B)$ achieved through the combined control law~\eqref{eq:combined_control}, we seek to drive the network state from initial condition $x_0 = x[0]$ to the desired terminal state $x_d = x[T]$ while minimizing energy consumption and respecting actuator constraints.

Following Proposition~\ref{prop:defractionalization}, the state evolution under the desired network parameters is governed by~\eqref{eq:defractionalized}. However, practical implementation necessitates truncating the infinite memory sum to a finite horizon $J \in \mathbb{N}$, introducing a bounded truncation error $w_J[k] \in \mathbb{R}^n$.

It follows that, given the maximum allowable actuator input magnitude $u_{\max} \in \mathbb{R}^+$, the memory truncation horizon $J$, $\alpha_i\ge \underline{\alpha}>0$ for all $i$, the maximum allowable state magnitude $B_x \in \mathbb R^+$, the final state value $x_d \in \mathbb{R}^n$ and the dynamic of the network, the energy-constrained terminal steering problem is formulated as
\begin{equation}
\begin{aligned}
\min_{\{\tilde{u}[k]\}_{k=0}^{T-1}}\ & \sum_{k=0}^{T-1}\|\tilde{u}[k]\|_2^2 \\[0.5em]
\end{aligned}
\label{eq:min_problem}
\end{equation}

% \begin{equation}
% \begin{aligned}
% \text{subject}& \ \text{to} \\[0.3em]
% & x[T] = x_d, \\[0.3em]
% & \|x[k]\|_{\infty} \leq B_x, \quad k = 0, \ldots, T-1, \\[0.3em]
% & \|\tilde{u}[k]\|_{\infty} \leq u_{\max}, \quad k = 0, \ldots, T-1, \\[0.3em]
% & x[k+1] = \sum_{j=0}^{\min\{k,J\}} \tilde{A}^{0,\alpha_d}_{d,j} x[k-j] + B\tilde{u}[k] + w_J[k], \\[0.3em]
% &\|w_J[k]\| \leq \varepsilon,  \quad k = J+1, \ldots, T-1,
% \end{aligned}
% \label{eq:energy_constrained}
% \end{equation}
subject to
$$x[T] = x_d,$$
$$\|x[k]\|_{\infty} \leq B_x, \quad k = 0, \ldots, T-1,$$
$$\|\tilde{u}[k]\|_{\infty} \leq u_{\max}, \quad k = 0, \ldots, T-1,$$
$$x[k+1] = \sum_{j=0}^{\min\{k,J\}} \tilde{A}^{0,\alpha_d}_{d,j} x[k-j] + B\tilde{u}[k] + w_J[k],$$
$$\|w_J[k]\| \leq \varepsilon, \quad k = J+1, \ldots, T-1.$$
\noindent where $\{\tilde{u}[k]\}_{k=0}^{T-1} \subset \mathbb{R}^p$ is the sequence of control inputs to be optimized. 

Here $w_J[k] \in \mathbb{R}^n$ denotes the truncation error, explicitly given by
\begin{align}
    w_J[k] &= -\sum_{j=J+1}^{k} \tilde{D}^{0,\alpha_d}_{j+1} x[k-j],
    \label{eq:truncation_error}
\end{align}
and we choose $\varepsilon>0$ in order to preserve the bound of the form $\varepsilon \le C J^{-\underline{\alpha}+1}\ln J$ (for some constant $C>0$) following from the decay estimate in Proposition~1.

\begin{remark}\label{rmk:optimization_formulation}
The problem formulated in~\eqref{eq:min_problem} is a quadratic program (QP) with linear constraints when the truncation error bound is treated as a robustness margin. The objective function minimizes the total control energy $\sum_{k=0}^{T-1}\|\tilde{u}[k]\|_2^2$, which is critical for practical applications where energy resources are limited.
\hfill $\circ$
\end{remark}

\begin{remark}\label{rmk:memory_truncation_choice}
The choice of truncation horizon $J$ involves a trade-off between computational complexity and control accuracy. A larger $J$ reduces the truncation error $\varepsilon$ but increases the computational burden of maintaining state history and evaluating the dynamics constraint. In practice, $J$ is selected based on the desired accuracy level and the decay rate determined by $\underline{\alpha}$. 

A rule of thumb is to select $\varepsilon$ such that it matches the inherent noise level in the system. Since real-world systems are invariably affected by process and measurement noise, there exists a fundamental limit below which distinguishing between truncation errors and noise-induced variations becomes impossible. Therefore, $\varepsilon$ should be chosen such that the memory truncation error is comparable to or smaller than the expected noise magnitude. Specifically, if the system noise is bounded by a factor $\sigma_v$, then selecting $\varepsilon \approx \sigma_v$ ensures that the truncation error remains within the noise floor, making further reduction of $J$ computationally inefficient without meaningful improvement in control accuracy.
\hfill $\circ$
\end{remark}

\begin{remark}\label{rmk:feasibility_robust}
The inclusion of the bounded truncation error $w_J[k]$ ensures that the problem in~\eqref{eq:min_problem} accounts for model uncertainty arising from finite memory implementation. This formulation naturally extends to robust optimal control frameworks where additional uncertainties in network parameters $(\alpha_d, A_d)$ may be present.
\hfill $\circ$
\end{remark}

%% file: 4_Numerical_Simulation.tex
%
%
%       Numerical Simulations
%
%
% To illustrate the utilization of the proposed results, we considered the same setting as in~\cite{Reed2023} where the first 6 channels of electrocorticography data from patient HUP64 ictal block 1 in the International Epilepsy Electrophysiology Portal~\cite{Wagenaar2013}. The data was recorded at a sampling rate of 512 Hz, pre-processed, and marked by clinical experts. 
\begin{figure*}
    \centering
    \includegraphics[width=0.7\textwidth]{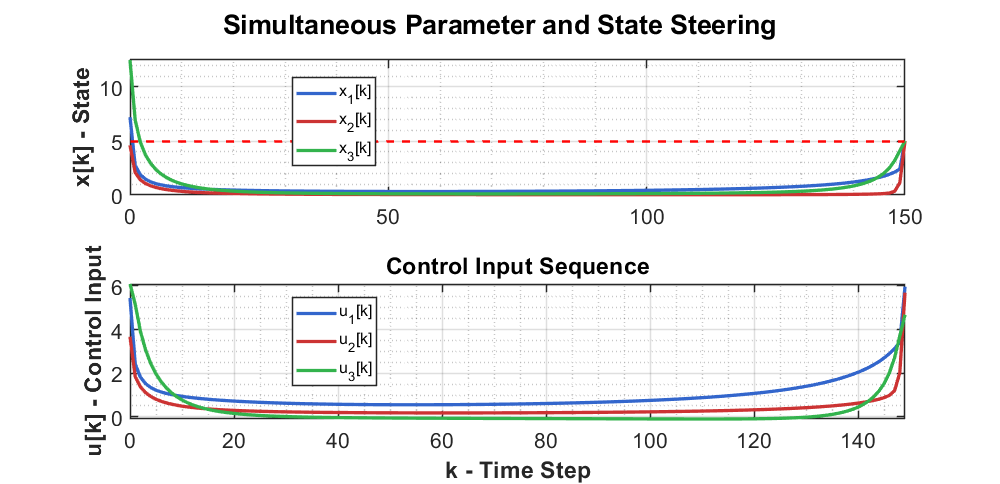}
    \vspace{-0.3cm}
    \caption{
    Results of simultaneous parameter and state steering for the 3-dimensional DTLFON. The top plots show the state dynamics for each component ($x_1$, $x_2$, $x_3$), demonstrating the successful transition from initial state $x_0^T = [7.2, 4.6, 12.5]$ to desired final state $x_d^T = [5.0, 5.0, 5.0]$. The bottom plots show the corresponding control input sequences $u_1[k]$, $u_2[k]$, and $u_3[k]$ applied over the 150-step control horizon to achieve both network parameter steering and state control.
    }
    \label{fig:param_state_steering}
\end{figure*}

Section IV demonstrates the practical effectiveness of our theoretical framework through three comprehensive examples. First, we present a detailed case study of simultaneous parameter and state steering in a 3-dimensional DTLFON, where we transition from an initial linear time-invariant configuration to a fractional-order network while simultaneously guiding the network state from a given initial condition to a desired target state over 150 time steps. 

Second, we validate the robustness and generalizability of our approach through extensive testing on artificially generated networks with three well-established topologies: Erdős-Rényi random networks, Barabási-Albert scale-free networks, and Watts-Strogatz small-world networks, scaling our analysis to 15×15 dimensional systems to demonstrate computational feasibility on realistic network sizes. 

Third, we showcase a clinically motivated application using real electrocorticography data from epilepsy patient HUP64, where we demonstrate how to steer a 6-channel brain network from pre-seizure dynamics (characterized by specific fractional exponents and coupling patterns) toward seizure-state configurations, illustrating the potential for therapeutic interventions through precise fractional parameter control.

\subsection{Example of Network Parameter Steering with Simultaneous State Control}
We consider the network transition from an initial LTI configuration (represented by choosing $\alpha$ as a zero vector) to a fractional-order configuration. The remaining parameter values and states are chosen randomly to create a toy example
\begin{align*}
\alpha &= \begin{bmatrix} 0 & 0 & 0 \end{bmatrix}, \\
A &= \begin{bmatrix}
-0.8 & 0.12 & 0.2 \\
-0.5 & 0.41 & 0.01 \\
0.6 & -0.02 & -0.27
\end{bmatrix}, \quad B = \begin{bmatrix}
1 & 0 & 0 \\
0 & 1 & 0 \\
0 & 0 & 0
\end{bmatrix},
\end{align*}
to the desired configuration
\begin{align*}
\alpha_d &= \begin{bmatrix} 0.5 & 0.1 & 0.8 \end{bmatrix}, \\
A_d &= \begin{bmatrix}
-0.1 & 0.1 & -0.05 \\
-0.1 & 0.1 & 0.15 \\
0.05 & -0.2 & -0.2
\end{bmatrix}, \quad B = \begin{bmatrix}
1 & 0 & 0 \\
0 & 1 & 0 \\
0 & 0 & 0
\end{bmatrix}.
\end{align*}
Simultaneously, we aim to steer the network state from initial state $x_0$ to the desired final state $x_d$, where:
\begin{align*}
x_0^T &= \begin{bmatrix} 7.2 & 4.6 & 12.5 \end{bmatrix}, \\
x_d^T &= \begin{bmatrix} 5.0 & 5.0 & 5.0 \end{bmatrix}.
\end{align*}
We consider a case in which a sequence of $T = 150$ input values is required to achieve both the network parameter transition $\mathcal{N}_w(\alpha,A,B) \to \mathcal{N}_w(\alpha_d,A_d,B)$ and the state transition $x_0 \to x_d$. 
The control approach proceeds in two stages: we first apply the results from Theorem~\ref{thm:theorem4} to compute the optimal input sequence that guides the DTLFON from the initial state $x_0$ to the desired final state $x_d$ for the target network configuration $\mathcal{N}_w(\alpha_d,A_d,B)$. This involves computing the fractional controllability matrix ${C^{0,\alpha_d}_{T}}$ and determining the minimum-energy input sequence.

In a second stage, we used Theorem~\ref{thm:theorem3} to determine the control input needed to transition the DTLFON from the network configuration $\mathcal{N}_w(\alpha,A,B)$ to the desired configuration $\mathcal{N}_w(\alpha_d,A_d,B)$ while preserving the previous input which allows to simultaneously achieving the state transition.
The resulting control input sequence $\{u[k]\}_{k=0}^{T}$ successfully achieves both objectives, steering the network parameters from $(\alpha,A)$ to $(\alpha_d,A_d)$ and the network state from $x_0$ to $x_d$ (see Fig~\ref{fig:param_state_steering}).

\subsection{Artificial Networks Analysis}

To provide generality and validate the robustness of our proposed framework across different network topologies, we conducted extensive testing on three well-established network architectures: Erdős-Rényi, Barabási-Albert, and \mbox{Watts-Strogatz} networks. These network models were selected due to their prominence in the literature and their frequent adoption in network analysis studies. 

Particularly relevant to our neuroscience applications, the Watts-Strogatz model represents small-world networks, which characterize the organizational topology of brain anatomical and functional networks. Such networks exhibiting both dense local clustering and short path lengths that support the segregated and integrated information processing observed in neural systems~\cite{Bassett2006}. 

\begin{figure}[htbp]
    \centering
    \includegraphics[width=0.48\textwidth]{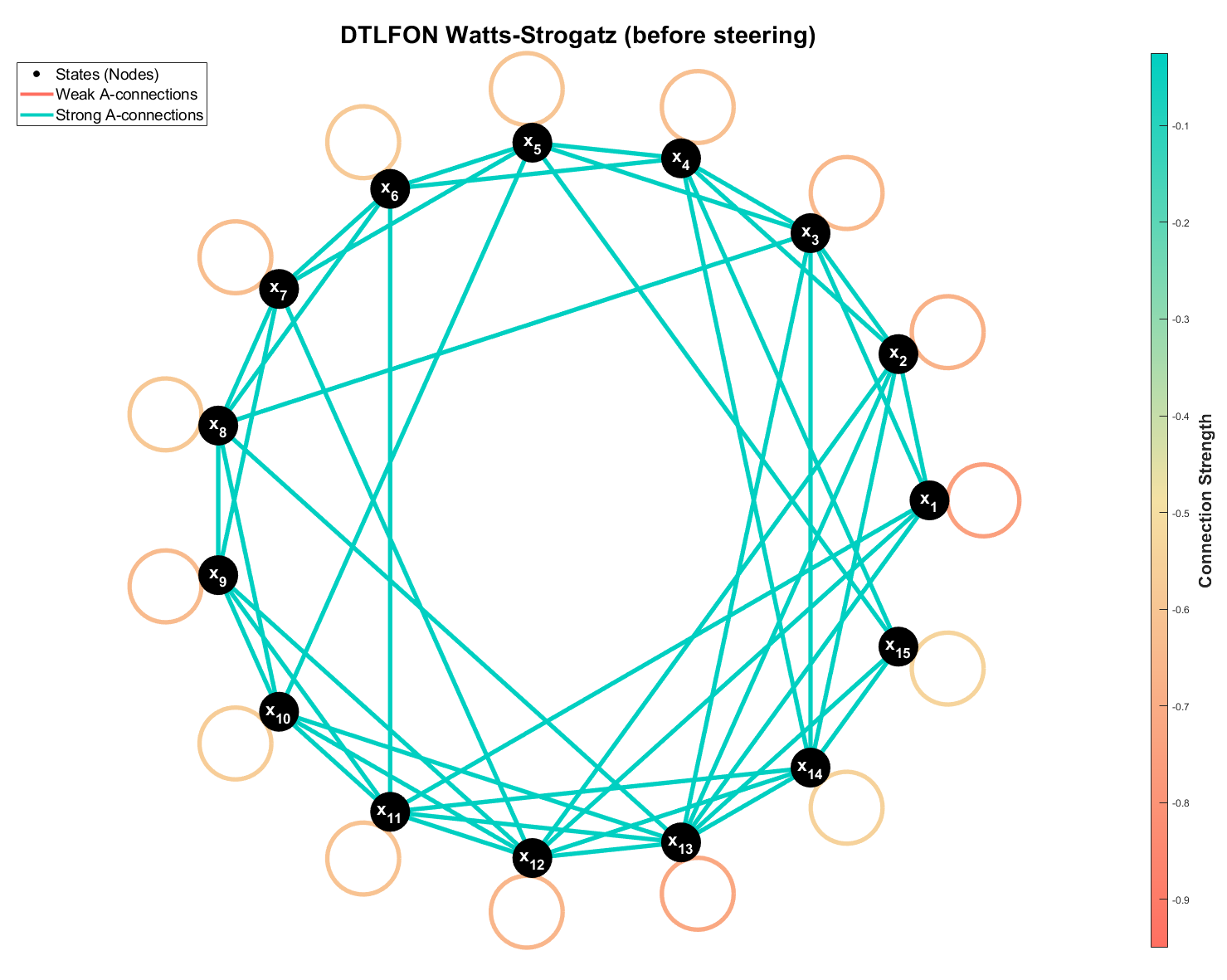}
    \caption{Visualization of a Watts-Strogatz network instance before parameter steering. Nodes represent the individual network states, while edges between nodes represent the coupling strength defined by the matrix $A$. The colorbar indicates connection strength: light blue denotes weak connections, sand color represents intermediate connections, and coral color indicates strong connections.}
    \label{fig:network_topology}
\end{figure}

An example of this network topology is visualized in Figure~\ref{fig:network_topology}, where the coupling structure between network states is represented through weighted connections. For each network topology, we constructed 15×15 dimensional systems, representing networks five times larger than the illustrative example presented above. This scaling allowed us to assess the computational feasibility and effectiveness of our control strategies on more complex, realistic network configurations while maintaining tractable analysis.

Notably, in this analysis we employed a more general input matrix $B$ that is not the identity matrix, as illustrated in Figure~\ref{fig:watts_strogatz_analysis} (Left). This demonstrates that our framework does not require full actuation on all network nodes; rather, it is sufficient that the input matrix $B$ satisfies the conditions established in Theorem~\ref{thm:theorem3} for combined parameter steering.

\begin{figure*}[htbp]
    \centering
    
    % Watts-Strogatz network analysis with three subplots
    \includegraphics[width=1\textwidth]{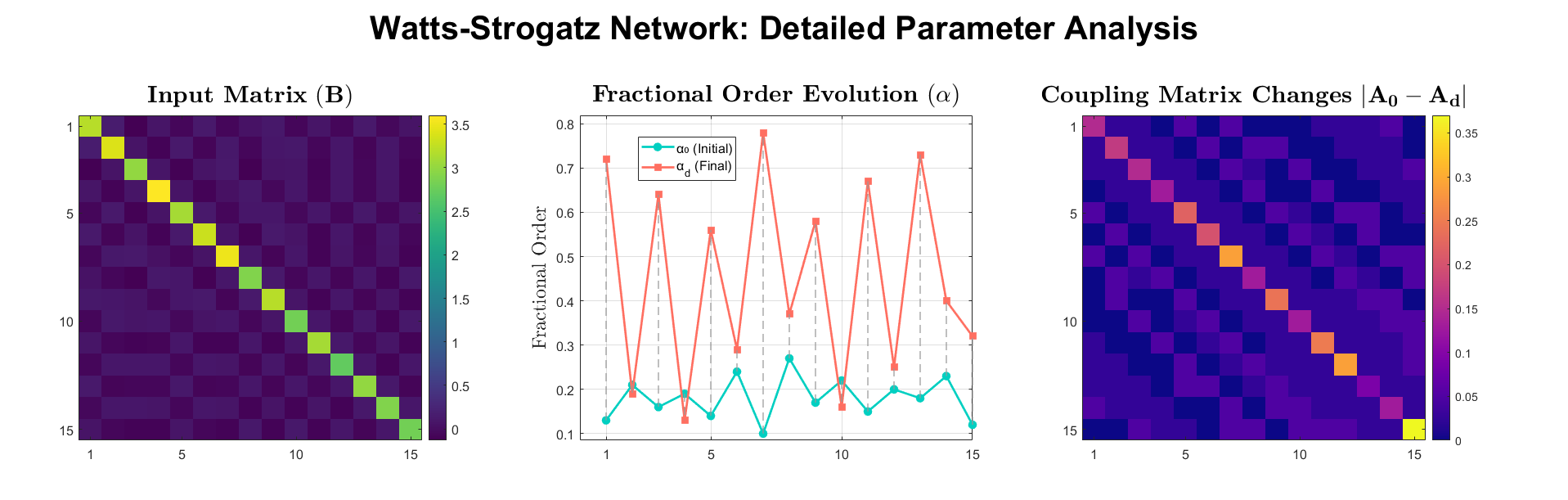}
    
    \caption{Watts-Strogatz network parameter analysis showing the steering process across three complementary visualizations. 
    \textbf{Left:} Input matrix (B) displaying the intensity with which control inputs are injected into each network node. Notably, this matrix remain the same throughout the entire steering process.
    \textbf{Center:} Fractional order evolution illustrating the progression from initial values $\alpha_0$ to final values $\alpha_d$, where most exponents were increased, indicating that the network moved to a state of stronger memory conditions during the steering process. 
    \textbf{Right:} Absolute difference heatmap showing $|A_0 - A_d|$ for each matrix entry, where $A_0$ represents the initial value of the spatial-temporal coupling matrix $A$ and $A_d$ the target value. This plot shows that the main contributions to coupling matrix changes occurred along the main diagonal elements, reflecting the restructuring of self-coupling dynamics during parameter steering.}
    
    \label{fig:watts_strogatz_analysis}
\end{figure*}

Across all tested network architectures, the fractional-order networks exhibited behavior consistent with our theoretical predictions. When subjected to input sequences designed according to the respective theorems, each network successfully demonstrated the ability to steer both network parameters and network states from predetermined initial conditions to arbitrary target configurations. An example of this parameter steering process can be seen in Figure~\ref{fig:watts_strogatz_analysis} (Center, Right), which illustrates the main parameters of the Watts-Strogatz network before and after steering.

The consistency of these results across diverse topological structures reinforces the general applicability of our proposed control framework and validates its potential for implementation in various real-world scenarios, particularly those involving complex brain network dynamics where topology plays a crucial role in determining overall network behavior.

\subsection{Example of Network Parameter Steering in the Brain}

To demonstrate the clinical relevance of our theoretical framework, we apply our methodology to brain network parameters derived from real electrocorticography (ECoG) data from an epileptic patient. This example illustrates how fractional-order network parameters can characterize different brain states and how our control framework can potentially guide therapeutic interventions.

We utilize network parameters extracted from electrocorticography data of patient HUP64 (ictal block 1) from the International Epilepsy Electrophysiology Portal, as analyzed and reported in~\cite{Reed2023}. The original data were recorded at a sampling rate of 512 Hz using clinical-grade electrodes implanted directly on the brain surface for surgical evaluation. Clinical experts marked seizure onset and termination times, and the recordings underwent standard pre-processing procedures.
Reed et al. performed parameter estimation for discrete-time linear fractional-order network (DTLFON) models using the first 6 channels of the recording, which correspond to electrodes positioned over cortical regions showing clear seizure activity.

We adopt the network parameters estimated at two critical time points: 12 seconds before seizure onset (representing stable pre-ictal dynamics), with network $\mathcal{N}_{w}(\alpha,A,B)$, and 48 seconds after seizure initiation (representing established seizure activity), with network $\mathcal{N}_w(\alpha_d,A_d,B)$. These represent two distinct brain configurations from the epilepsy dataset.

The pre-ictal network parameters are the following:
\[
{\scriptsize A = \begin{bmatrix} 
-0.0479 &0.1919&0.1358&-0.3619&0.0249&-0.0736\\
-0.0756&0.2065&0.1274&-0.4029&0.0150&-0.0175\\
   -0.1016&0.2417&0.0904  & -0.6914&0.0453&0.1177\\
   -0.1157&0.1818&0.0733&0.2628&0.0862 & -0.0117\\
0.0047&0.1554&0.0656  & -0.4384&0.2923  & -0.0307\\
0.0756&0.3646&0.0106  & -0.7247&0.0531 &  -0.0169
\end{bmatrix},}
\]
\[
B=\mathbb{I}_{6\times 6}, \text{ and }
\]
{\footnotesize \[ 
\alpha = \begin{bmatrix}0.6228&0.6189&0.4996&0.5375&0.9873&0.6843\end{bmatrix}.
\] }

Moreover, the target network parameters are:
{\scriptsize \[ A_d = \]}
{\scriptsize \[
\begin{bmatrix}
-0.0261 &0.0192&-0.0679&-0.0977&-0.2879&-0.0260\\
-0.1454&0.0623&-0.0168&0.1685&-0.1814&-0.3644\\
   -0.1690&-0.0720&0.0637  & -0.2994&-0.0289&0.2221\\
   -0.1157&-0.0435&0.0039&0.6502&-0.2592 & -0.0048\\
-0.1116&-0.1056&0.0591  & -0.0344&0.1489  & 0.1556\\
-0.2673&-0.0428&-0.1412  & 0.3054&-0.2701 &  -0.1832
\end{bmatrix},
\]}
{\footnotesize \[
 \alpha_d = \begin{bmatrix}
    0.0945&0.0478&0.2429&0.1801&0.1835&0.2648
\end{bmatrix}.
\] }

First, we apply Theorem~\ref{thm:theorem1} to steer the coupling matrix from $A$ to $A_d$. The required feedback matrix $K$ is:
{\scriptsize \[
K = A - A_d = 
\]}
{\scriptsize \[
\begin{bmatrix}
-0.0218 & 0.1727 & 0.2037 & -0.2642 & 0.3128 & -0.0476 \\
0.0698 & 0.1442 & 0.1442 & -0.5714 & 0.1964 & 0.3469 \\
0.0674 & 0.3137 & 0.0267 & -0.3920 & 0.0742 & -0.1044 \\
0.1025 & 0.2253 & 0.0694 & -0.3874 & 0.3454 & -0.0069 \\
0.1163 & 0.2610 & 0.0065 & -0.4040 & 0.1434 & -0.1863 \\
0.3429 & 0.4074 & 0.1518 & -1.0301 & 0.3232 & 0.1663
\end{bmatrix}.
\]}

Having established the coupling matrix steering mechanism, we now focus on the DTLFON fractional exponent control. 
The choice of fractional exponents $\alpha_d$ and coupling matrix $A_d$ significantly influences the network's dynamic behavior, and certain parameter combinations can lead to undesired network responses that may require correction through appropriate control interventions.

In order to achieve more favorable network dynamics, we steer $\alpha$ to a new value, denoted as $\alpha_d$, by adopting the result in Theorem~\ref{thm:theorem2}. In Fig.~\ref{fig:test1}, we present four distinct scenarios that showcase different behavioral regimes when the network is subjected to various input sequences.

In the first plot at the top, we observe the dynamics of a DTLFON characterized by fractional exponent $\alpha$ and coupling matrix $A_d$ when exposed to a unitary step input. This configuration exhibits pronounced oscillatory behavior with growing amplitude, representing an undesired operational regime that motivates the need for parameter control.

Progressing to the second plot, the focus shifts to a DTLFON with modified fractional exponent $\alpha_d$, again responding to a unitary step input. This comparison demonstrates how altering the fractional exponent fundamentally transforms the network's response characteristics, resulting in more controlled and bounded dynamics compared to the first scenario.

The third plot further illustrates the effectiveness of our approach by depicting the response of the original DTLFON (with fractional exponent $\alpha$) to a carefully designed input sequence defined in accordance with Theorem~\ref{thm:theorem2}. This scenario reveals how strategic input design can compensate for unfavorable parameter configurations, demonstrating the power of our theoretical framework.

Finally, the fourth plot offers a comprehensive view by combining elements of the preceding scenarios. It features the response of the original DTLFON to an input sequence that begins with a unitary step for the initial time period (from 0 to 15), subsequently transitioning to the theoretically-designed sequence from Theorem~\ref{thm:theorem2} for the remainder of the observation period (from 16 to 100). This demonstrates the transition from uncontrolled to controlled behavior through strategic input modification.

It is important to note that at this stage we have not yet employed the results from Theorem~\ref{thm:theorem4} to simultaneously steer the network state. As a consequence, in the final plot the state reaches an unpredicted value that differs from the previous scenarios, highlighting the need for combined parameter and state control strategies.

\begin{figure}
    \centering
    \includegraphics[scale=0.27]{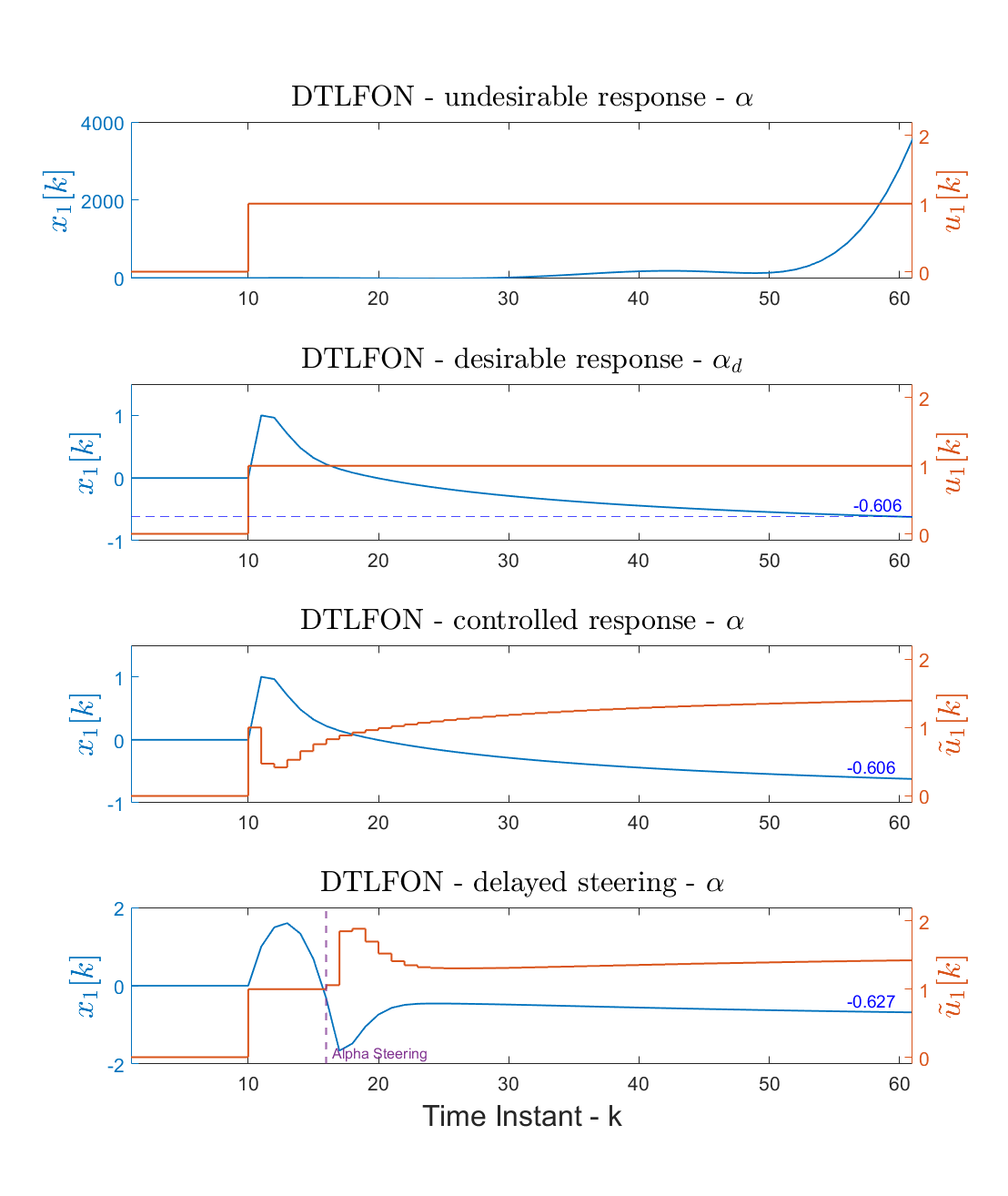}
    \vspace{-0.5cm}
    \caption{
    Response of DTLFON with different parameters and input sequences. To highlight the network response, the first state dynamic is depicted. From top: (1) Undesirable response of network $\mathcal{N}_w(\alpha, A, B)$ to a unit step input due to parameter configuration; (2) Desirable response of network $\mathcal{N}_w(\alpha_d, A_d, B)$ to a unit step input, demonstrating favorable dynamics from appropriate parameter selection; (3) Desirable response achieved by steering the network from $\mathcal{N}_w(\alpha, A, B)$ to $\mathcal{N}_w(\alpha_d, A_d, B)$ through an input sequence constructed according to Theorem~\ref{thm:theorem3}; (4) Response when input injection is delayed, beginning with unit step input ($t \in[0,15]$) then transitioning to the Theorem~\ref{thm:theorem3} sequence ($t > 15$, red dotted line). The delayed injection illustrates how altering $\alpha$ and $A$ affects the network state, highlighting the need for Theorem~\ref{thm:theorem4} to simultaneously control both parameters and state.
    }
    \label{fig:test1}
\end{figure}

%% file: 5_Conclusion.tex
%
%
%       Conclusion
%
%
In our work we shown that fractional-order networks can be systematically reconfigured by controlling both fractional exponents and coupling matrices through strategically computed input sequences, while providing the necessary conditions and computational methodologies to determine these sequences. 
The practical applicability of our theoretical framework was demonstrated through numerical simulations, particularly in neuroscience applications where abnormal brain network dynamics, such as those observed during epileptic seizures, can potentially be guided toward healthier configurations through precise parameter control.
Our methodology allows for simultaneous parameter and state steering as in Theorem~\ref{thm:theorem2} and Theorem~\ref{thm:theorem3}, while simultaneously achieving state control objectives as in Theorem~\ref{thm:theorem4}. 
An interesting extension of this work would be to explore scenarios where constraints are present, which would necessitate formulating a non-linear optimization problem to compute the optimal sequences of coupling matrix $A$, fractional exponents $\alpha$, and network state $x$ that collectively achieve the desired configuration within the available energy budget.
Future research directions include developing constrained control strategies, integrating machine learning techniques for real-time adaptation, investigating robust control methodologies to handle uncertainties, and extending the framework to other neurological conditions beyond epilepsy management.